\tikzset{
  bigblue/.style={circle, draw=blue!80,fill=blue!40,thick, inner sep=1.5pt, minimum size=5mm},
  bigred/.style={circle, draw=red!80,fill=red!40,thick, inner sep=1.5pt, minimum size=5mm},
  bigblack/.style={circle, draw=black!100,fill=black!40,thick, inner sep=1.5pt, minimum size=5mm},
  bluevertex/.style={circle, draw=blue!100,fill=blue!100,thick, inner sep=0pt, minimum size=2mm},
  redvertex/.style={circle, draw=red!100,fill=red!100,thick, inner sep=0pt, minimum size=2mm},
  blackvertex/.style={circle, draw=black!100,fill=black!100,thick, inner sep=0pt, minimum size=2mm},  
  whitevertex/.style={circle, draw=black!100,fill=white!100,thick, inner sep=0pt, minimum size=2mm},  
  smallblack/.style={circle, draw=black!100,fill=black!100,thick, inner sep=0pt, minimum size=1mm},  
}
\def\acc#1{\left\{ #1 \right\}}
\renewcommand{\le}{\leqslant}
\renewcommand{\ge}{\geqslant}
\newcommand{\ecto}{\to}
\newcommand{\sto}{\ensuremath{\stackrel{s}{\longrightarrow}}}
\newcommand{\shom}[1]{\textsc{s-Hom}\ensuremath{(#1)}}
\newcommand{\HOM}[1]{\textsc{Hom}\ensuremath{(#1)}}
\newcommand{\Pshom}[1]{\textsc{Planar s-Hom}\ensuremath{(#1)}}
\newcommand{\PHOM}[1]{\textsc{Planar Hom}\ensuremath{(#1)}}
\newtheorem{theorem}{Theorem}[section]
\newtheorem{prop}[theorem]{Proposition}
\newtheorem{cor}[theorem]{Corollary}
\newtheorem{defn}[theorem]{Definition}
\newtheorem{lemma}[theorem]{Lemma}
\newtheorem*{thm:dicho}{Theorem~\ref{thm:dicho}}
\begin{document}

\begin{frontmatter}

\title{Complexity of planar signed graph homomorphisms to cycles}

\author[I3S]{Fran\c{c}ois Dross\fnref{anr}}
\ead{francois.dross@googlemail.com}
\author[labri,lifo]{Florent Foucaud\fnref{anr,ifcam}}
\ead{florent.foucaud@gmail.fr}
\author[irif]{Valia Mitsou\fnref{anr}}
\ead{vmitsou@liris.cnrs.fr}
\author[lirmm]{Pascal Ochem\corref{cor}\fnref{anr}}
\ead{pascal.ochem@lirmm.fr}
\author[labri]{Th\'eo Pierron\fnref{anr,ifcam}}
\ead{theo.pierron@labri.fr}

\address[I3S]{I3S, Universit\'e Nice Sophia Antipolis, France}
\address[labri]{Univ. Bordeaux, Bordeaux INP, CNRS, LaBRI, UMR5800}
\address[lifo]{Univ. Orl\'eans, INSA Centre Val de Loire, LIFO EA 4022, F-45067 Orl\'eans Cedex 2, France}
\address[irif]{IRIF, Universit\'e de Paris, France}
\address[lirmm]{LIRMM, Universit\'e de Montpellier, CNRS, Montpellier, France}

\fntext[anr]{This work is supported by the ANR project HOSIGRA (ANR-17-CE40-0022).}
\fntext[ifcam]{This author is partially supported by the IFCAM project ``Applications of graph homomorphisms'' (MA/IFCAM/18/39).}

\begin{abstract}
  We study homomorphism problems of signed graphs from a computational
  point of view. A signed graph is an undirected graph where each edge
  is given a sign, positive or negative. An important concept when
  studying signed graphs is the operation of switching at a vertex,
  which is to change the sign of each incident edge. A homomorphism of
  a graph is a vertex-mapping that preserves the adjacencies; in the
  case of signed graphs, we also preserve the edge-signs. Special
  homomorphisms of signed graphs, called s-homomorphisms, have been
  studied. In an s-homomorphism, we allow, before the mapping, to
  perform any number of switchings on the source signed graph. The
  concept of s-homomorphisms has been extensively studied, and a full
  complexity classification (polynomial or NP-complete) for
  s-homomorphism to a fixed target signed graph has recently been
  obtained. Nevertheless, such a dichotomy is not known when we
  restrict the input graph to be planar, not even for non-signed graph
  homomorphisms.

  We show that deciding whether a (non-signed) planar graph admits a
  homomorphism to the square $C_t^2$ of a cycle with $t\ge 6$, or to
  the circular clique $K_{4t/(2t-1)}$ with $t\ge2$, are
  NP-complete problems. We use these results to show that deciding whether a
  planar signed graph admits an s-homomorphism to an unbalanced even cycle is NP-complete. (A cycle is unbalanced if it has an odd number of negative edges). We deduce a
  complete complexity dichotomy for the planar s-homomorphism problem with any
  signed cycle as a target.

  We also study further restrictions involving the maximum
  degree and the girth of the input signed graph. We prove that planar s-homomorphism problems to signed cycles remain NP-complete even for inputs of maximum degree~$3$ (except for the case of unbalanced $4$-cycles, for which we show this for maximum degree~$4$). We also show that for a given integer $g$, the problem for signed bipartite planar inputs of girth $g$ is either trivial or NP-complete.
\end{abstract}

\begin{keyword}
signed graph \sep edge-coloured graph \sep graph homomorphism \sep planar graph
\end{keyword}

\end{frontmatter}

\section{Introduction}

In this paper, we study the computational complexity of graph and
signed graph homomorphism problems. Our main focus is the case where
the inputs are planar and the targets are cycles (but we also consider
other cases). Homomorphisms are structure-preserving mappings between
discrete structures; this type of problems is very general and models
many combinatorial problems. Consequently, the study of the
algorithmic properties of homomorphism problems is a rich area of
research that has gained a lot of attention. We refer to the
book~\cite{HNbook} as a reference on graph homomorphism problems.

An edge-coloured graph is a graph with several types of (undirected)
edges: each type corresponds to a colour. Given two edge-coloured
graphs $G$ and $H$, a \emph{homomorphism} of $G$ to $H$ is a
vertex-mapping $f$ of $V(G)$ to $V(H)$ that preserves adjacencies and
edge-colours, that is, if $x$ and $y$ are adjacent via a $c$-coloured
edge in $G$, then $f(x)$ and $f(y)$ must be adjacent via a
$c$-coloured edge in $H$ as well. When such a homomorphism exists, we
write $G\to H$. This concept is studied for example
in~\cite{AM98,Bthesis,HKRS01,MPPRS10,OPS17}. In this language,
standard undirected graphs can simply be seen as $1$-edge-coloured
graphs. Signed graphs are a special type of $2$-edge-coloured graphs
whose edge-colours are signs: positive and negative. In this paper, we
will consider two types of objects: standard undirected graphs (that will simply be
called \emph{($1$-edge-coloured) graphs}), and signed graphs.

\paragraph{Computational homomorphism problems} 

The most fundamental class of algorithmic homomorphism problems is the
following one (where $H$ is any fixed edge-coloured graph):

\medskip\noindent
\HOM{H} \\
Instance: An (edge-coloured) graph $G$.\\
Question: Does $G$ admit a homomorphism to $H$?
\medskip

Problem \HOM{H} has been studied for decades. For example, consider
$1$-edge-coloured graphs, and denote the $3$-vertex complete graph by
$K_3$. Then, \HOM{K_3} is the classic \textsc{$3$-Colouring} problem,
shown NP-complete by Karp~\cite{K72}. (More generally, a proper
$t$-colouring of a graph $G$ is a homomorphism to the complete graph
$K_t$.) \HOM{H} for edge-coloured graphs is studied for example
in~\cite{Bthesis,B94,BFHN17,BH00,MO17}.

\HOM{H} is also studied under the name of \textsc{Constraint
  Satisfaction Problem} (CSP), see for example~\cite{FV98}. In this
setting, edge-coloured graphs are seen as structures coming with a
number of symmetric binary relations (one for each edge-colour). In the context of CSPs, one also
considers discrete relational structures that may have relations of
arbitrary arities. The celebrated \emph{Dichotomy Conjecture} of Feder
and Vardi~\cite{FV98} and the subsequent work aims at classifying the
complexity of general CSP problems. While the conjecture was recently
solved in~\cite{B17,Z17} (independently) using the tools and language
of \emph{universal algebra}, this algebraic formulation does not
always provide simple explicit descriptions of the dichotomy. Thus,
obtaining explicit dichotomy classifications for relevant special
cases is still of major interest.

When studying \HOM{H}, we may always restrict ourselves to
edge-coloured graphs $H$ that are cores: $H$ is a \emph{core} if it
does not have any homomorphism to a proper subgraph of itself (in
other words, all of its endomorphisms are automorphisms). Moreover,
\emph{the} core of an edge-coloured graph $H$, noted $\text{core}(H)$,
is the smallest subgraph of $H$ that is a core. It is well-known that
the core of an edge-coloured graph is unique (up to isomorphism). It
is not difficult to observe that the complexity of \HOM{H} is the same
as the one of \HOM{\text{core}(H)}. For more details on these notions
see the book~\cite{HNbook}.

One of the most fundamental results in the area of CSP dichotomies is
the one obtained for $1$-edge-coloured graphs by Hell and
Ne\v{s}et\v{r}il. They proved in~\cite{HN90} that if the core of an
undirected graph $H$ has at least two edges, \HOM{H} is NP-complete,
and polynomial-time otherwise. (Note that the latter condition equates to say that $H$ is bipartite or has a loop, a property that is easily
testable.) One may ask the following natural question: what is the
behaviour of the above dichotomy for specific restrictions of the
input graphs?

\paragraph{Planar instances} 

Colouring and homomorphism problems are studied extensively for the
class of planar graphs. One consequence of the Four Colour Theorem
(that any planar graph is properly $4$-colourable) is that, contrarily
to the general case, \HOM{K_4} is trivially polynomial-time solvable
for planar instances. In this paper, we will study the following
restriction of \HOM{H}:

\medskip\noindent
\PHOM{H}\\
Instance: A planar (edge-coloured) graph $G$.\\
Question: Does $G$ admit a homomorphism to $H$?
\medskip

A complexity dichotomy for \PHOM{H} in the case of $1$-edge-coloured
graphs, if it exists, is probably not as easy to describe as the
Hell-Ne\v{s}et\v{r}il dichotomy for \HOM{H}. \PHOM{H} is known to be
NP-complete for $H=K_3$~\cite{GJ79}, but, as mentioned before, it is trivially
polynomial-time when $H$ contains a $4$-clique. There are other non-trivial examples
that are polynomial-time. For example, consider the Clebsch graph
$C_{16}$, a remarkable triangle-free graph of order~$16$. It follows
from~\cite{G12,N07} that every triangle-free planar graph has a
homomorphism to $C_{16}$. Since $C_{16}$ itself has no triangle, a
planar graph maps to $C_{16}$ if and only if it is triangle-free, and
thus \PHOM{C_{16}} is polynomial-time solvable. Infinitely many such
examples are known, see~\cite{HNT06,NO06}.

\PHOM{H} for $1$-edge-coloured graphs was studied more extensively
in~\cite{HNT06,MS09}, where it was independently proved to be
NP-complete when $H$ is any odd cycle $C_{2k+1}$, via two different
techniques. It is also proved in~\cite{MS09} that \PHOM{H} is
NP-complete whenever $H$ is subcubic and has girth~$5$. \PHOM{H} is
also proved NP-complete in~\cite{HNT06} when $H$ is an odd wheel, or
the Penny graph.

\paragraph{Further instance restrictions} 

The difficulty of classifying the complexity of \PHOM{H} makes it
meaningful to further refine the pool of graph instances to be
examined.  For example, restrictions on the maximum degree are studied
in~\cite{GHN00,S09}. In~\cite{S09}, it is proved that for every
undirected non-bipartite graph $H$, there is an integer $b(H)$ such
that \HOM{H} is NP-complete for graphs of maximum degree $b(H)$. The
value of $b(H)$ can be arbitrarily large, but for many graphs $H$,
$b(H)=3$~\cite{S09}. In particular, it is proved in~\cite{GHN00} that
for all $k\geq 1$, $b(C_{2k+1})=3$, that is, \HOM{C_{2k+1}} is
NP-complete for subcubic graphs. In fact, the result from~\cite{GHN00}
(combined with~\cite{HNT06,MS09}) also implies that \PHOM{C_{2k+1}} is
NP-complete for subcubic graphs.

Other restrictions are on the \emph{girth} of the input graph, that
is, the smallest length of a cycle. It is known that for any
$k\geq 1$, there is an integer $g=g(k)$ such that all planar graphs of
girth at least $g$ admit a homomorphism to $C_{2k+1}$~\cite{GGH01}. A
restriction to planar graphs of a conjecture of Jaeger~\cite{J88} implies that $g(k)=4k$, and it is
known that $4k\leq g(k)\leq (20k-2)/3$~\cite{BKKW04}.
  In~\cite{EMOP12}, it is proved that for every
fixed $k\geq 2$ and $g\geq 3$, either every planar graph of girth at
least $g$ maps to $C_{2k+1}$, or \PHOM{C_{2k+1}} is NP-complete for
such graphs. Other examples of this type of ``hypothetical
complexity'' theorems exist in other contexts, see for
example~\cite{FS,K98}.

\paragraph{Signed graphs and switching homomorphisms}

Signed graphs are special types of $2$-edge-coloured graphs, whose
edge-colours represent signs: positive and negative. Formally, a signed graph is a pair $(G,\sigma)$, where $G$ is the underlying graph (the
graph containing both the positive and the negative edges) and
$\sigma:E(G)\to\{-1,+1\}$ is the sign function that describes the
edge-signs. Signed graphs were studied as early as the 1930's in the
first book on graph theory by K\"onig~\cite{K36}. They were
rediscovered in the 1950's by Harary~\cite{H53}, who introduced the name
\emph{signed graph} and applied them to the area of social psychology. The concept was later developed by Zaslavsky in~\cite{Z82b} and numerous subsequent
papers~\cite{Z81,Z82b,Z82a,Z82c,Z84} and has become an important part
of combinatorics, with many connections to deep results and
conjectures. See~\cite{Zsurvey} for a dynamic bibliography on the
topic.

Zaslavsky~\cite{Z82b} introduced the \emph{switching} operation: given
a signed graph $G$ and a vertex $v$, to switch at $v$ is to change the
sign of all edges incident to $v$ (this can be seen as multiplying
their signs by $-1$). We say that two signed graphs $G$ and $G'$ are
\emph{switching-equivalent} if $G'$ can be obtained from $G$ by any
sequence of switchings. Note that one can test switching-equivalence
in polynomial time (on labelled graphs), see~\cite{BFHN17,Z82b}.

The switching operation turns out to be important in the context of
graph minors, and it relates to some outstanding problems in graph
theory, see~\cite{G05,NRS13,NRS14} for details. In this context,
Guenin introduced in~\cite{G05} a special kind of homomorphisms of
signed graphs, whose theory was later developed
in~\cite{NRS13,NRS14}. Following the terminology in~\cite{BFHN17}, we
define an \emph{s-homomorphism} of a signed graph $G$ to a signed
graph $H$ as a vertex-mapping $f$ from $V(G)$ to $V(H)$ such that
there exists a signed graph $G'$ that is switching-equivalent to $G$,
and $f$ is a classic edge-coloured graph homomorphism of $G'$ to
$H$. (Note that additionally one may allow switching at $H$; this does not change the problem, and we will generally not do so, but we can fix a convenient switching-equivalent sign function of $H$ and stick to it.) When an s-homomorphism exists, we note $G\sto H$.

Similarly to edge-coloured graph homomorphism, we say that a signed
graph $G$ is an \emph{s-core} if $G$ admits no s-homomorphism to a
proper signed subgraph of itself, and \emph{the} s-core of $G$ is the
smallest subgraph of $G$ that is an s-core (it is unique up to
s-isomorphism and switching~\cite{NRS14}).

We next define the decision problem that corresponds to
s-homomorphisms (with $H$ a fixed signed graph).

\medskip\noindent
\shom{H} \\
Instance: A signed graph $G$.\\
Question: Does $G$ admit an s-homomorphism to $H$?
\medskip

Note that for two switching-equivalent signed graphs $H$ and $H'$, the
definition of an s-homomorphism implies that \shom{H} and \shom{H'}
have the same complexity.

Extending the Hell-Ne\v{s}et\v{r}il dichotomy~\cite{HN90} for \HOM{H} for $1$-edge-coloured graph problems, a complexity dichotomy for
\shom{H} problems was proved in the papers~\cite{BFHN17,BS18}. The
authors showed that if the s-core of a signed graph $H$ has at least
three edges, then \shom{H} is NP-complete; it is polynomial-time
otherwise. On the other hand, it was shown in~\cite{BFHN17} that a
similar dichotomy for \HOM{H} problems for signed graphs (that is,
$2$-edge-coloured graphs and no switching allowed), is as difficult to obtain as the one for
general CSPs. This indicates that it probably cannot be stated in
simple graph-theoretic terms.

The authors of~\cite{BFHN17} asked what is the complexity of \shom{H}
problems when the input is planar. Let us define the corresponding
analogue of \PHOM{H}.

\medskip\noindent
\Pshom{H}\\
Instance: A planar signed graph $G$.\\
Question: Does $G$ admit an s-homomorphism to $H$?
\medskip

An interesting case is the one when $H=(C_k,\sigma)$ is a cycle
($k\geq 3$). If $H$ is switching-equivalent either to an all-positive
$C_{k}$, or to an all-negative $C_{k}$, then the complexity of
\shom{H} and \Pshom{H} are the same as the ones of \HOM{C_k} and
\PHOM{C_{k}}, respectively~\cite{BFHN17}. When $k$ is odd, one of
these two cases holds, and thus by the results of~\cite{HNT06,MS09},
in that case \Pshom{H} is NP-complete.

To state the situation when $k$ is even, it is convenient to introduce
the notion of \emph{balance}: a signed graph is \emph{balanced} if
every cycle contains an even number of negative edges. This central
notion is already present in the work of K\"onig~\cite{K36} but was
theorized by Harary~\cite{H53}.

If $k=2t$ is even and $H=(C_k,\sigma)$ is balanced, $H$ is switching-equivalent to
a positive $2$-vertex complete graph and thus
\Pshom{H} is polynomial-time. When $H$ is unbalanced, by switching $H$ if
necessary, we may assume that $H$ has a unique negative edge and
denote this signed graph by $UC_{2t}$. Then, \shom{UC_{2t}} has been
proved to be NP-complete~\cite{BFHN17,FN14}, but the complexity of
\Pshom{UC_{2t}} is not known. We settle this question in this paper.

We point out that homomorphisms of non-signed graphs to odd cycles are an important topic in the theory of homomorphisms and circular colourings. Odd cycles are among the simplest non-trivial graphs (with respect to homomorphisms) and are fundamental in the study of graph colourings. Several well-studied open questions and conjectures involving homomorphisms to odd cycles exist in the area, such as Jaeger's conjecture and others (see~\cite{BKKW04,EMOP12} for more details). For s-homomorphisms of bipartite signed graphs, unbalanced even cycles play a similar role as odd cycles for homomorphisms of non-signed graphs. Here also, certain interesting conjectures and theorems are stated, see~\cite{CNS} for a recent study. This motivates the study of the complexity of s-homomorphisms to cycles.

\paragraph{Our results} 

Our main goal is to prove that \Pshom{UC_{2k}} is NP-complete whenever
$k\geq 2$. As a first step, in Section~\ref{sec:non-signed}, we study non-signed graphs. We prove
that \PHOM{H} is NP-complete when $H$ is the square of a cycle; in
turn, this is used to prove that \PHOM{H} is NP-complete when $H$ is a
cubic circular clique. In Section~\ref{sec:uc2k}, we use these results to prove that
\Pshom{UC_{2k}} is NP-complete whenever $k\geq 3$. In
Section~\ref{sec:uc4}, using a different technique, we prove that the
case $k=2$ is also NP-complete. In Section~\ref{sec:girth}, we show
that for every integer $k\geq 1$ and even integer $g\geq 2$, either
every planar bipartite signed graph of girth $g$ admits a homomorphism
to $UC_{2k}$, or \Pshom{UC_{2k}} is
NP-complete for planar bipartite inputs of girth $g$. In Section~\ref{sec:maxdeg}, we
show that the results of Section~\ref{sec:uc2k} also apply to subcubic
input signed graphs (except for $UC_4$, for which this holds for maximum degree~$4$). We first start with some preliminary
considerations in Section~\ref{sec:prelim}.

\section{Preliminaries}\label{sec:prelim}

This section gathers some preliminary considerations.

\subsection{Some definitions}

Given a graph $G$, the \emph{square} of $G$, denoted $G^2$, is the
graph obtained from $G$ by adding edges between all vertices at
distance~$2$.

Given two integers $p$ and $q$ with $\gcd(p,q)=1$, the \emph{circular
  clique} $K_{p/q}$ is the graph on vertex set
$\{k_0,\ldots, k_{p-1}\}$ with $k_i$ adjacent to $k_j$ if and only if
$q\leq |i-j|\leq p-q$. Circular cliques are defined in the context of
circular chromatic number, see for example~\cite{Zhu99}.

\subsection{Switching graphs}

We now describe a construction that is important when studying
s-homomorphisms.

\begin{defn} Let $G$ be a signed graph. The \emph{switching graph} of
  $G$ is a signed graph denoted $\rho(G)$ and constructed as follows.
\begin{itemize}
\item[(i)] For each vertex $u$ in $V(G)$ we have two vertices $u_0$
  and $u_1$ in $\rho(G)$.
\item[(ii)] For each edge $e$ between $u$ and $v$ in $G$, we have four
  edges between $u_i$ and $v_j$ ($i,j \in \{ 0, 1 \}$) in $\rho(G)$,
  with the edges between $u_i$ and $v_i$ having the same sign as $e$
  and the edges between $u_i$ and $v_{1-i}$ having the opposite sign
  ($i \in \{0,1\}$).
\end{itemize}
\end{defn}

See Figure~\ref{fig1} for examples of signed graphs and their
switching graphs. (In all our figures, dashed edges are red/negative, while full edges are blue/positive). The notion of switching graph was defined by
Brewster and Graves in~\cite{BG09} in a more general setting related
to permutations (they called it \emph{permutation graph}). A related
construction was used in~\cite{KM04} in the context of digraphs. The
construction is also used in~\cite{OPS17} under the name
\emph{antitwinned graph}. Switching graphs play a key role in the
study of signed graph homomorphisms, indeed they have several useful
properties. One such property is that the switching graph of a signed
graph contains, as subgraphs, all switching-equivalent signed
graphs. Additionally, the following proposition allows us to study
s-homomorphisms in the realm of standard homomorphisms.

\begin{prop}[\cite{BFHN17}]
  \label{prop:permequiv}
  Let $G$ and $H$ be two signed graphs. Then, $G \sto H$ if and only if $G \ecto \rho(H)$.
\end{prop}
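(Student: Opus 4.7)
The plan is to establish both directions of the biconditional by constructing explicit maps that translate between the two notions of homomorphism, using as the bridge the set of vertices on which a switching is performed.

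For the forward direction, I assume $G \sto H$, which by definition means there is a signed graph $G'$ switching-equivalent to $G$ and a (sign-preserving) homomorphism $f : V(G') \to V(H)$. I record the switching by a set $S \subseteq V(G)$ of vertices where switching is applied to obtain $G'$. I then lift $f$ to a map $\tilde{f} : V(G) \to V(\rho(H))$ by setting $\tilde{f}(u) = f(u)_0$ if $u \notin S$ and $\tilde{f}(u) = f(u)_1$ if $u \in S$. To verify $\tilde{f}$ is an edge-coloured homomorphism $G \ecto \rho(H)$, I consider an edge $uv$ of $G$ with sign $\sigma(uv)$ and compare: (a) the sign of $\tilde{f}(u)\tilde{f}(v)$ in $\rho(H)$, which by construction agrees with the sign of $f(u)f(v)$ in $H$ when $u,v$ lie on the same side of $S$ and flips when they lie on opposite sides; and (b) the sign of the corresponding edge in $G'$, which equals $\sigma(uv)$ when $|\{u,v\}\cap S|$ is even and $-\sigma(uv)$ when it is odd. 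A short case analysis shows that both quantities equal $\sigma(uv)$ in the same cases, so edge-signs are preserved.

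For the backward direction, I suppose $g : V(G) \to V(\rho(H))$ is an edge-coloured homomorphism. Each vertex of $\rho(H)$ has the form $h_\varepsilon$ with $h \in V(H)$ and $\varepsilon \in \{0,1\}$, so $g(u) = h(u)_{\varepsilon(u)}$ canonically defines two functions $h : V(G) \to V(H)$ and $\varepsilon : V(G) \to \{0,1\}$. I take $S = \varepsilon^{-1}(1)$, let $G'$ be obtained from $G$ by switching at every vertex of $S$, and claim that $h$ is a sign-preserving homomorphism $G' \to H$, which would give $G \sto H$. The verification is the symmetric counterpart of the forward case: combining the definition of $\rho(H)$ (same sign when $\varepsilon(u) = \varepsilon(v)$, opposite sign otherwise) with the definition of switching shows that the sign of $h(u)h(v)$ in $H$ coincides with the sign of $uv$ in $G'$.

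There is no substantial obstacle: the statement is essentially bookkeeping translating the bit $\varepsilon(u) \in \{0,1\}$ attached to each lifted vertex of $\rho(H)$ into a switching set $S \subseteq V(G)$, and conversely. The only mildly delicate point is the case analysis verifying that ``same side of $S$'' matches ``edge-sign preserved by switching'' and that ``opposite sides of $S$'' matches ``edge-sign flipped'', so that in every case the sign demanded by $\rho(H)$ coincides with the sign of $G'$. Since this is done uniformly over edges, the two constructions are visibly inverse, and the proposition follows.
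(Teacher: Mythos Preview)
Your proof is correct and is the standard argument for this fact. Note, however, that the paper does not actually supply a proof of this proposition: it is quoted from~\cite{BFHN17} and used as a black box, so there is no in-paper proof to compare against. Your write-up matches the argument one finds in the cited source (and in~\cite{BG09}), namely encoding the switching set $S$ via the second coordinate $\varepsilon\in\{0,1\}$ in $\rho(H)$ and checking that the parity of $|\{u,v\}\cap S|$ governs both the sign change under switching and the sign convention in the definition of $\rho(H)$.
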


Thus, we obtain the following corollary.

\begin{cor}\label{cor:permequiv}
  Let $H$ be a signed graph. Then, \shom{H} and \Pshom{H} have the
  same complexity as \HOM{\rho(H)} and \PHOM{\rho(H)}, respectively.
\end{cor}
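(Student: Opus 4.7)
The plan is to observe that Corollary~\ref{cor:permequiv} is essentially a restatement of Proposition~\ref{prop:permequiv} phrased in the language of decision problems, so the proof amounts to exhibiting trivial (identity) polynomial-time reductions in both directions and checking that they preserve planarity.

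More concretely, I would argue as follows. Fix a signed target $H$. For the reduction from \shom{H} to \HOM{\rho(H)}, given an instance $G$ of \shom{H} (a signed graph), output $G$ itself as an instance of \HOM{\rho(H)} (interpreting $G$ as a $2$-edge-coloured graph in the usual way). By Proposition~\ref{prop:permequiv}, $G\sto H$ if and only if $G\ecto\rho(H)$, so this is a valid many-one reduction, and it runs in constant time since $H$ is fixed (so $\rho(H)$ is a fixed signed graph that need not even be constructed on input). For the reverse reduction from \HOM{\rho(H)} to \shom{H}, the identity map again works: given a $2$-edge-coloured input $G$, view it as a signed graph; the same proposition gives the equivalence of the two answers. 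Both reductions are polynomial, so \shom{H} and \HOM{\rho(H)} have the same complexity.

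For the planar versions, the key point is that the identity reductions above do not alter the input graph at all, and in particular preserve planarity in both directions. Hence the same argument, applied verbatim to planar inputs, yields that \Pshom{H} and \PHOM{\rho(H)} have the same complexity.

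I do not anticipate a real obstacle here: the whole content of the corollary is packed into Proposition~\ref{prop:permequiv}, and the only thing to verify is that one can move freely between the two viewpoints without breaking planarity or polynomial-time computability, which is immediate since the reduction is the identity on the instance side and $\rho(H)$ is a fixed object on the target side.
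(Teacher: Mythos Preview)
Your proposal is correct and matches the paper's approach: the paper presents the corollary as an immediate consequence of Proposition~\ref{prop:permequiv} without a separate proof, and your identity-reduction argument is exactly the intended (trivial) justification. There is nothing to add.
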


\subsection{The indicator construction}

We recall the \emph{indicator construction}, one of the main tools
used in the proof of the Hell-Ne\v{s}et\v{r}il dichotomy for \HOM{H}
in~\cite{HN90}.

\begin{defn}
  Let $H$ be a signed graph. An \emph{indicator} $(I,i,j)$, is a
  signed graph $I$ with two distinguished vertices $i$ and $j$ such
  that $I$ admits an automorphism mapping $i$ to $j$ and
  vice-versa. The \emph{result of the indicator $(I,i,j)$} applied to
  $H$ is an undirected graph denoted $H^*$ and defined as follows.
  \begin{itemize}
  \item[(i)] $V(H^*) = V(H)$
  \item[(ii)] There is an edge from $u$ to $v$ in $H^*$ if there is a
    homomorphism of $I \ecto H$ such that $i \mapsto u$ and
    $j\mapsto v$.
\end{itemize}

We say that $(I,i,j)$ \emph{preserves planarity} if, given a planar
undirected graph $G$, replacing each edge $uv$ with a copy of
$(I,i,j)$ by identifying $u$ with $i$ and $v$ with $j$, we obtain a
planar signed graph.
\end{defn}

The following result shows how we can use this tool.

\begin{theorem}[Hell and Ne\v{s}et\v{r}il~\cite{HN90}]
  \label{thm:indicator}
  Let $H$ be a signed graph, $(I,i,j)$, an indicator and $H^*$, the
  undirected graph resulting from $(I,i,j)$ applied to $H$. Then,
  \HOM{H^*} admits a polynomial-time reduction to
  \HOM{H}. 

  Moreover, if the indicator construction preserves planarity, then
  \PHOM{H^*} admits a polynomial-time reduction to \PHOM{H}.
\end{theorem}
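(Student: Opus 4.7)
The plan is to prove both parts by a single polynomial-time reduction: given an input graph $G$ for \HOM{H^*} (respectively a planar input in the planarity-preserving case), construct a signed graph $G'$ by taking the vertex set of $G$ and, for each edge $uv$ in $E(G)$, gluing in a fresh private copy of the indicator $I$ in which the two distinguished vertices $i$ and $j$ are identified with $u$ and $v$ respectively (internal vertices of $I$ stay private to this copy). This is clearly computable in polynomial time, and if $(I,i,j)$ preserves planarity and $G$ is planar, then $G'$ is planar by the very definition of \emph{preserves planarity}.

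I would then show that $G \ecto H^*$ if and only if $G' \ecto H$. For the forward implication, suppose $f\colon V(G)\to V(H^*)$ is a homomorphism. For each edge $uv$ of $G$, the pair $(f(u),f(v))$ is an edge of $H^*$, so by the definition of $H^*$ there exists a homomorphism $g_{uv}\colon I\to H$ with $g_{uv}(i)=f(u)$ and $g_{uv}(j)=f(v)$. Extend $f$ to $G'$ by mapping the internal vertices of the copy of $I$ corresponding to $uv$ according to $g_{uv}$. Since the copies are internally disjoint, no conflict arises, and the result is a homomorphism $G'\ecto H$.

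For the backward implication, suppose $f'\colon V(G')\to V(H)$ is a homomorphism. Define $f$ on $V(G)\subseteq V(G')$ as the restriction of $f'$. For each edge $uv\in E(G)$, restricting $f'$ to the corresponding copy of $I$ yields a homomorphism of $I$ to $H$ sending $i$ to $f(u)$ and $j$ to $f(v)$, so $(f(u),f(v))$ is an edge of $H^*$, and $f$ is the required homomorphism $G\ecto H^*$.

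The one subtlety, which is the main thing to address carefully, is that $G$ is undirected while the indicator has an ordered pair of distinguished vertices: when inserting a copy of $I$ for the edge $uv$, one must choose which endpoint is identified with $i$ and which with $j$. This choice is precisely what the hypothesis on $(I,i,j)$ takes care of: the automorphism swapping $i$ and $j$ guarantees that the role of the two endpoints is interchangeable, so the construction is well-defined regardless of the chosen orientation, and in particular the adjacency relation of $H^*$ is symmetric. Once this is observed, the equivalence above goes through verbatim, and the planarity statement follows directly from the planarity-preservation hypothesis on $(I,i,j)$.
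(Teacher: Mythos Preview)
Your proof is correct and follows exactly the same approach as the paper: construct $G'$ by replacing each edge of $G$ with a fresh copy of $(I,i,j)$ and show $G\ecto H^*$ if and only if $G'\ecto H$, with planarity carried by the planarity-preservation hypothesis. The paper only sketches this argument, while you additionally spell out both directions of the equivalence and the role of the $i\leftrightarrow j$ automorphism in making the construction well-defined and $H^*$ symmetric.
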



\begin{proof}
  We sketch the proof. Given an input graph $G$ of \HOM{H^*}, we
  construct a signed graph $f(G)$ by replacing each edge $uv$ in $G$
  by a copy of $(I,i,j)$ by identifying $u$ with $i$ and $v$ with
  $j$. (If $G$ is planar and $(I,i,j)$ preserves planarity, then
  $f(G)$ is also planar.) Now it is not difficult to show that
  $G\to H^*$ if and only if $f(G)\to H$.
\end{proof}

As an example, consider the signed graph $UC_4$ and its switching
graph $\rho(UC_4)$. Let $I$ be the $4$-cycle with two parallel
negative and two parallel positive edges, where $i$ and $j$ are two
non-adjacent vertices $i$ and $j$. The result $\rho(UC_4)^*$ of
$(I,i,j)$ applied to $\rho(UC_4)$ is shown in Figure~\ref{fig1}: it
consists of two disjoint copies of $K_4$ (thus its core is $K_4$). By
Theorem~\ref{thm:indicator} and Corollary~\ref{cor:permequiv}, this
implies that \HOM{\rho(UC_4)} and \shom{UC_4} are NP-complete, by a
reduction from \HOM{K_4}. Note that the application of $(I,i,j)$
preserves planarity; however it is useless to reduce \PHOM{K_4} to
\PHOM{UC_4} since the former is polynomial-time solvable by the Four
Colour Theorem. We thus handle this case with an ad-hoc proof in
Section~\ref{sec:uc4}.

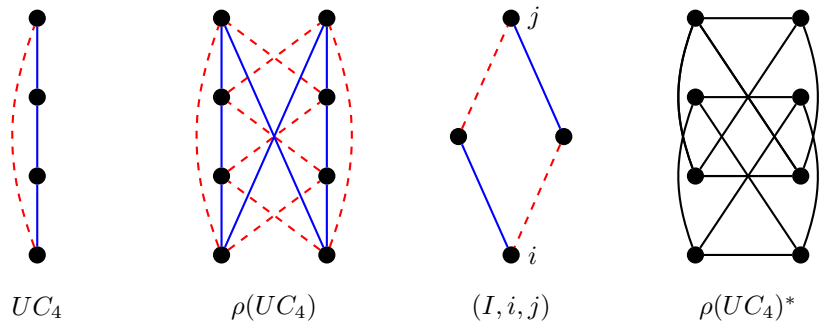
\begin{figure}[!htpb]
\begin{center}
\begin{tikzpicture}[every loop/.style={},scale=0.7]
  \node[blackvertex] (u) at (0,0) {};
  \node[blackvertex] (v) at (0,1.5) {};
  \node[blackvertex] (w) at (0,3) {};
  \node[blackvertex] (x) at (0,4.5) {};
  \draw[thick,blue] (u)--(v)--(w)--(x);
  \draw[thick,red,dashed] (u) to[bend left=20] (x);
  
  \node at (0,-1) {$UC_4$};

\begin{scope}[xshift=3.5cm]
  \node[blackvertex] (u0) at (0,0) {};
  \node[blackvertex] (v0) at (0,1.5) {};
  \node[blackvertex] (w0) at (0,3) {};
  \node[blackvertex] (x0) at (0,4.5) {};
  \draw[thick,blue] (u0)--(v0)--(w0)--(x0);
  \draw[thick,red,dashed] (u0) to[bend left=20] (x0);

  \node[blackvertex] (u1) at (2,0) {};
  \node[blackvertex] (v1) at (2,1.5) {};
  \node[blackvertex] (w1) at (2,3) {};
  \node[blackvertex] (x1) at (2,4.5) {};
  \draw[thick,blue] (u1)--(v1)--(w1)--(x1);
  \draw[thick,red,dashed] (u1) to[bend right=20] (x1);

  
  \draw[thick,red,dashed] (u0)--(v1)--(w0)--(x1) (u1)--(v0)--(w1)--(x0);
  \draw[thick,blue] (u0)--(x1) (u1)--(x0);
  
  \node at (1,-1) {$\rho(UC_4)$};
\end{scope}

\begin{scope}[xshift=9cm]
  \node[blackvertex] (i) at (0,0) {};
  \draw (i) node[right=1mm] {$i$};
  \node[blackvertex] (j) at (0,4.5) {};
  \draw (j) node[right=1mm] {$j$};
  \node[blackvertex] (x) at (-1,2.25) {};
  \node[blackvertex] (y) at (1,2.25) {};
  \draw[thick,blue] (x)--(i) (y)--(j);
  \draw[thick,red,dashed] (x)--(j) (y)--(i);
  
  \node at (0,-1) {$(I,i,j)$};
\end{scope}

\begin{scope}[xshift=12.5cm]
  \node[blackvertex] (u0) at (0,0) {};
  \node[blackvertex] (v0) at (0,1.5) {};
  \node[blackvertex] (w0) at (0,3) {};
  \node[blackvertex] (x0) at (0,4.5) {};

  \node[blackvertex] (u1) at (2,0) {};
  \node[blackvertex] (v1) at (2,1.5) {};
  \node[blackvertex] (w1) at (2,3) {};
  \node[blackvertex] (x1) at (2,4.5) {};

  \draw[thick,black] (x0)--(x1) (x0)--(v1) (v0)--(x1);
  \draw[thick,black] (w0)--(w1) (u0)--(w1) (w0)--(u1);
  \draw[thick,black] (v0)--(v1) (v0) to[bend left=20] (x0) (v1) to[bend right=20] (x1);
  \draw[thick,black] (u0)--(u1) (u0) to[bend left=20] (w0) (u1) to[bend right=20] (w1);

  \draw[thick,black] (x0)--(v1)--(v0) to[bend left=20] (x0);
  \node at (1,-1) {$\rho(UC_4)^*$};

\end{scope}

\end{tikzpicture}
\end{center}
\caption{The unbalanced cycle $UC_4$, its switching graph $\rho(UC_4)$, the indicator $(I,i,j)$ and its resulting undirected graph $\rho(UC_4)^*$.}
\label{fig1}
\end{figure}

\section{Some NP-complete (non-signed) \PHOM{H} cases}
\label{sec:non-signed}

In this section, we prove that \PHOM{H} is NP-complete for two special
cases which were not known to be NP-complete.

\subsection{Squares of cycles}
\label{sec:cyclesquares}

We first deal with the case where $H$ is the square $C_t^2$ of a cycle
$C_t$. The proof is inspired by the proof that \PHOM{C_{2k+1}} is
NP-complete from~\cite{MS09}. Note that, $C_4^2=K_4$ and $C_5^2=K_5$,
and thus by the Four Colour Theorem \PHOM{C_t^2} is polynomial-time
solvable when $t=4,5$.

\begin{theorem}\label{thm:c2t}
For every $t\ge6$, \PHOM{C_t^2} is NP-complete.
\end{theorem}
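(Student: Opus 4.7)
The plan is to reduce a known NP-complete planar homomorphism problem to $\PHOM{C_t^2}$, following the gadget-based style of the reduction from~\cite{MS09} for odd cycles. A convenient first observation is that $C_t^2$ is the Cayley graph of $\mathbb{Z}_t$ with connection set $\{\pm 1,\pm 2\}$, so any homomorphism $\phi\colon G\to C_t^2$ amounts to a labelling of $V(G)$ by residues in $\mathbb{Z}_t$ whose edges realize a displacement in $\{\pm 1,\pm 2\}$, and closed walks accumulate a total displacement that is $0\bmod t$.

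I would first dispose of the easy case $t\equiv 0\pmod 3$: the three residue classes of $\mathbb{Z}_t$ modulo $3$ are independent sets of $C_t^2$, so quotienting gives a retraction $C_t^2\to K_3$ (with inverse the inclusion of the triangle $\{0,1,2\}$). Consequently $\PHOM{C_t^2}$ and $\PHOM{K_3}$ are polynomially equivalent and NP-completeness follows from~\cite{GJ79}. The remaining cases $t\not\equiv 0\pmod 3$ are harder, because then $C_t^2$ does not even admit a homomorphism to $K_3$, and no easy retraction to a known hard target is available.

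For these cases I would construct a direct gadget reduction from planar $3$-colorability (or, if needed, from $\PHOM{C_{2k+1}}$ for a suitable odd $2k+1$, known to be NP-complete by~\cite{HNT06,MS09}). Two planar gadget families would be designed: a \emph{variable gadget}, a small planar subgraph with a distinguished terminal $r$ such that in any homomorphism to $C_t^2$ the image $\phi(r)$ is pinned to a fixed $3$-element subset of $V(C_t^2)$ (up to the rotational automorphism of $C_t^2$); and an \emph{edge gadget} that, wired between two terminals, enforces that the two terminals map to distinct elements of that $3$-element subset. Given a planar instance $G$, one replaces each vertex of $G$ by a variable gadget and each edge by an edge gadget, producing a planar graph $G'$ of polynomial size with $G\to K_3$ iff $G'\to C_t^2$.

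The principal technical obstacle is the rigidity of the variable gadget: because each vertex of $C_t^2$ has four neighbors, short walks admit many images, so rigidity cannot come purely from local structure. It has to come from \emph{global} modular obstructions: combining cycles whose lengths, together with the constraint that the total $\{\pm 1,\pm 2\}$-displacement around each is $0\bmod t$, rule out all but a few residues for the image of the terminal. A case analysis depending on $t\bmod 6$ (the natural modulus attached to the step set $\{\pm 1,\pm 2\}$ in $\mathbb{Z}_t$) is likely unavoidable, and the gadget sizes will grow with $t$. Once the gadgets are in hand, planarity of the overall construction follows from routing the edge gadgets along a planar embedding of the source instance, and correctness is a direct consequence of the rigidity and enforcement properties built into the two gadget families.
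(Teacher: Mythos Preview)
Your treatment of the case $t\equiv 0\pmod 3$ is correct and matches the paper exactly: $C_t^2$ retracts onto the triangle $\{0,1,2\}$, so the problem is equivalent to planar $3$-colouring.

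For the remaining cases, however, there is a genuine gap: you do not actually construct the variable and edge gadgets. You outline their desired specifications (pin a terminal to a three-element orbit, then enforce inequality between terminals), you correctly anticipate that local structure alone will not suffice and that global modular constraints around cycles must be exploited, and you predict a case split mod~$6$ --- but the gadgets themselves never appear. Since producing a rigid planar gadget for $C_t^2$ when $3\nmid t$ is precisely the nontrivial content of the theorem, the proposal as written is a plan rather than a proof.

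The paper's argument is also worth knowing because it sidesteps the need for a vertex-pinning gadget entirely. It refines the case split (mod $3$ and mod $4$) and in three of the four subcases reduces not from $K_3$ but from $\PHOM{C_\ell}$ for a suitable odd $\ell$. When $t$ is odd or $t\equiv 2\pmod 4$, the gadget attached to each edge of the source graph is (essentially) $C_t^2$ itself, or $C_t^2$ minus one edge: any homomorphism of this gadget into $C_t^2$ is forced to be an automorphism, which immediately rigidifies the image of the distinguished edge to the ``$+1$'' or ``$+2$'' orbit and gives a reduction from $\PHOM{C_t}$ or $\PHOM{C_{t/2}}$. The hard case $t\equiv 0\pmod 4$ (where both step classes induce bipartite subgraphs, so no odd-cycle reduction is available) is handled by a different device: one inserts a face-vertex $u_f$ into every face of a planar embedding of $G$ and connects it to the incident vertices via copies of $C_t^2$ minus two edges; a minimal counterexample is then shown to be a planar triangulation all of whose vertices have even degree, hence Eulerian, hence $3$-colourable --- a contradiction. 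This last idea is the one you are missing, and it is not obviously recoverable from the variable/edge-gadget template you sketch.
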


\begin{proof} We will reduce from \PHOM{C_{2k+1}} for suitable values of $k$, which is NP-complete whenever $k\geq 1$~\cite{MS09}. The proof is split into different cases depending on the values of $t\bmod{3}$ and
  $t\bmod{4}$.

  If $t\equiv0\bmod{3}$, then the core of $C_t^2$ is
  $K_3$ and we are done since \PHOM{K_3} is NP-complete.

  Otherwise, $C_t^2$ is a core. Let $v_0,\ldots,v_{t-1}$ be its
  vertices and $v_iv_{i+1}$ and $v_iv_{i+2}$ be its edges (indices are
  taken modulo $t$).

  If $t\equiv2\bmod{4}$, then $C_t^2$ is planar, the set of edges
  $v_iv_{i+1}$ induces a cycle of length $t$, and the set of edges
  $v_iv_{i+2}$ induces two disjoint odd cycles of length $t/2$.  We
  reduce \PHOM{C_{t/2}} to \PHOM{C_t^2}.  Let $G$ be a planar graph
  and let $G'$ be the planar graph obtained from $G$ as follows.  For
  every edge $e$ of $G$, we add a copy of $C_t^2$ and we identify the
  edge $v_0v_2$ of this copy of $C_t^2$ with $e$. One can see that $G$ maps
  to $C_{t/2}$ if and only if $G$ maps to $C_t^2$, and we are done.

  If $t$ is odd, then $C_t^2$ is not planar, the set of edges
  $v_iv_{i+1}$ (resp. $v_iv_{i+2}$) induces an odd cycle and there
  exists no automorphism of $C_t^2$ that maps an edge $v_iv_{i+1}$ to
  an edge $v_iv_{i+2}$.  Consider the planar graph $H$ obtained from
  $C_t^2$ by removing the edge $v_0v_2$.  Notice that every
  homomorphism of $H$ to $C_t^2$ actually corresponds to an
  automorphism of $C_t^2$.  We reduce \PHOM{C_t} to \PHOM{C_t^2}.  Let
  $G$ be a planar graph and let $G'$ be the planar graph obtained from
  $G$ as follows.  For every edge $e$ of $G$, we add a copy of $H$ and
  we identify the edge $v_0v_1$ of this copy of $H$ with $e$.
  Again, $G$ maps to $C_t$ if and only if $G'$ maps to $C_t^2$ and we are done.

  If $t\equiv0\bmod{4}$, then both the set of edges $v_iv_{i+1}$ and
  the set of edges $v_iv_{i+2}$ induce a bipartite graph.  Thus, the
  kind of reductions above does not work.  We use instead a reduction
  similar to the one used for odd cycles in~\cite{MS09}. We reduce
  \PHOM{K_3} (that is, \textsc{Planar $3$-Colouring}) to \PHOM{C_t^2}.  Let us set $t=4k$. Consider the graph
  $H$ obtained from $C_t^2$ by removing the edges $v_{2k-1}v_{2k}$ and
  $v_{2k}v_{2k+2}$.  Notice that every homomorphism of $H$ to $C_t^2$
  that maps $v_0$ to $v_0$ also maps $v_{2k}$ to either $v_{2k-1}$,
  $v_{2k}$, or $v_{2k+1}$.

  Let $G$ be a planar graph and let $G'$ be the planar graph obtained
  from a planar embedding of $G$ as follows. For every face $f$ of
  $G$, we first place a new vertex $u_f$ inside the face $f$, and then
  for every vertex $w$ of $G$ incident to $f$, we add a copy of $H$
  and identify $v_0$ with $u_f$ and $v_{2k}$ with $w$. Every vertex in
  the subgraph $G$ of $G'$ is said to be \emph{old}. If $G$ is
  $3$-colourable, then we can map the subgraph $G$ of $G'$ to the
  triangle $v_{2k-1}v_{2k}v_{2k+1}$ of $C_t^2$. Then we can extend
  this $C_t^2$-colouring to $G'$ such that every vertex $u_f$ of $G'$
  maps to $v_0$.

  It remains to show that if $G'$ maps to $C_t^2$, then $G$ is
  $3$-colourable.  So we suppose for contradiction that there exists a
  planar graph $G$ such that $G'$ maps to $C_t^2$ and $G$ is not
  $3$-colourable.  Moreover, we choose such a graph $G$ with the
  minimum number of vertices.

  Let us first show that $G$ is a planar triangulation. Suppose to
  the contrary that $G$ contains a face $f$ of length at least $4$.
  Then $G'$ has a $C_t^2$-colouring that maps $u_f$ to $v_0$. So,
  every old vertex in $G'$ that corresponds to a vertex incident with
  $f$ in $G$ is mapped to a vertex in
  $\acc{v_{2k-1},v_{2k},v_{2k+1}}$.  Since $|f|\ge4$, two such old
  vertices $x$ and $y$ in $G'$ get the same colour.  Let $H$ be the
  planar graph obtained from $G$ by identifying $x$ and $y$ and
  removing multiple edges.  Notice that $H$ is not $3$-colourable,
  whereas the graph $H'$ obtained by applying our reduction to $H$
  maps to $C_t^2$. This contradicts the minimality of $G$ and thus $G$
  is a planar triangulation.

  Let $w$ be any old vertex. Since $G'$ maps to $C_t^2$, consider a
  $C_t^2$-colouring of $G'$ that maps $w$ to $v_0$.  Then every old
  vertex adjacent to $w$ maps to a vertex in
  $S=\acc{v_{-2},v_{-1},v_{1},v_{2}}$.  Notice that $S$ induces a path
  in $C_t^2$.  Since $G$ is a triangulation, the old vertices adjacent
  to $w$ induce a cycle $C$ which maps to $S$.  So $C$ maps to a
  bipartite graph and thus $C$ is bipartite.  This means that the
  length of $C$ is even and thus that the degree of $w$ in $G$ is
  even.  Thus, the degree of every vertex of $G$ is even, that is, $G$
  is an Eulerian planar triangulation.  This is a contradiction since
  every Eulerian planar triangulation is $3$-colourable (see Exercise
  9.6.2 in~\cite{BM76}).
\end{proof}

\subsection{Cubic circular cliques}\label{sec:circularcliques}
Recall that $K_{4t/(2t-1)}$ is the circular clique with vertex set
$\{k_0,\ldots,k_{4t-1}\}$ and such that $k_i$ is adjacent to $k_{i+2t-1}$,
$k_{i+2t}$, and $k_{i+2t+1}$ (indices being taken modulo $4t$). We now
use our result of Section~\ref{sec:cyclesquares} to show the
following.

\begin{theorem}\label{thm:cc}
For every $t\ge2$, \PHOM{K_{4t/(2t-1)}} is NP-complete.
\end{theorem}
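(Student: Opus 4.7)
My plan is to use the indicator construction (Theorem~\ref{thm:indicator}) to reduce $\PHOM{C_{4t}^{2}}$, which is NP-complete by Theorem~\ref{thm:c2t} since $4t \ge 8 \ge 6$ for every $t \ge 2$, to $\PHOM{K_{4t/(2t-1)}}$. The indicator I would use is $(I,i,j)$, where $I$ is a cycle $C_{2t+1}$ and $i,j$ are two vertices lying at distance $2$ along one arc and $2t-1$ along the other; equivalently, $I$ consists of two internally-disjoint paths of lengths $2$ and $2t-1$ joining $i$ and $j$. The graph $I$ is planar, and the reflection through the midpoint (the unique internal vertex of the short arc) is an involutive automorphism exchanging $i$ and $j$, so $(I,i,j)$ is a valid indicator. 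Pasting a copy of $I$ in place of each edge of a planar input preserves planarity, since $I$ can be drawn in a small neighbourhood of any edge.

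The crux of the argument is computing the graph $H^{*}$ produced by the indicator. An edge $uv$ of $H^{*}$ corresponds to the simultaneous existence in $K_{4t/(2t-1)}$ of a walk of length $2$ and a walk of length $2t-1$ from $u$ to $v$. For walks of length $2$: the neighbourhood of a vertex $u$ in $K_{4t/(2t-1)}$ is $\{u+(2t-1), u+2t, u+(2t+1)\}$ modulo $4t$, so $u$ and $v$ share a common neighbour iff their cyclic distance in $\mathbb{Z}_{4t}$ is at most $2$. For walks of length $2t-1$: each step shifts the position by one of $2t-1$, $2t$, or $2t+1$ modulo $4t$, so such a walk ends at $u+s$ where $s$ ranges over all integers in $[(2t-1)(2t-1),(2t-1)(2t+1)] = [4t^{2}-4t+1,\, 4t^{2}-1]$; every integer in this interval is attained, since changing one summand by $1$ changes $s$ by $1$, and modulo $4t$ this interval is exactly $\{1,2,\ldots,4t-1\}$. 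Hence a walk of length $2t-1$ from $u$ reaches every vertex of $K_{4t/(2t-1)}$ except $u$ itself.

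Intersecting the two constraints, $uv$ is an edge of $H^{*}$ if and only if the cyclic distance of $u$ and $v$ in $\mathbb{Z}_{4t}$ lies in $\{1,2\}$; that is, $H^{*}=C_{4t}^{2}$. By Theorem~\ref{thm:indicator}, $\PHOM{C_{4t}^{2}}$ admits a polynomial-time reduction to $\PHOM{K_{4t/(2t-1)}}$, and the latter is therefore NP-complete.

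The main thing to be careful with is the walk-of-length-$(2t-1)$ computation: one must confirm both that the sums miss $0$ modulo $4t$ (so that $H^{*}$ has no loops that would trivialise the reduction) and that they cover every other residue (so that no adjacency of $C_{4t}^{2}$ is lost in $H^{*}$). The small case $t=2$, where the chosen walk lengths $2$ and $2t-1=3$ are unusually close, merits an explicit sanity check inside $K_{8/3}$, but the uniform argument goes through without modification.
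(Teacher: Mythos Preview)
Your proposal is correct and follows essentially the same approach as the paper: the same indicator $(C_{2t+1},i,j)$ with $i,j$ at distance~$2$ is applied to $K_{4t/(2t-1)}$, and the resulting graph is identified as $C_{4t}^{2}$, yielding the reduction via Theorem~\ref{thm:indicator}. The only difference is in how $H^{*}$ is computed: the paper observes that the odd girth of $K_{4t/(2t-1)}$ is $2t+1$, so any homomorphic image of $C_{2t+1}$ is injective and hence a $(2t+1)$-cycle of $K_{4t/(2t-1)}$, from which the positions of $f(j)$ relative to $f(i)$ can be read off directly; you instead decompose the cycle into its two arcs and compute the reachable vertices by walks of lengths $2$ and $2t-1$ separately before intersecting. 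Your argument is somewhat more explicit and self-contained (it does not invoke the odd girth), while the paper's is terser; both arrive at the same conclusion.
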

\begin{proof}
  We reduce \PHOM{C_{4t}^2}, which is NP-complete by Theorem~\ref{thm:c2t}, to
  \PHOM{K_{4t/(2t-1)}}. For this, consider the $1$-edge-coloured
  indicator $(C_{2t+1},i,j)$ consisting of a cycle of length $2t+1$ on
  which $i$ and $j$ are at distance~$2$. Clearly, this indicator
  construction preserves planarity. Now, consider a homomorphism $f$
  from $(C_{2t+1},i,j)$ to $K_{4t/(2t-1)}$. By the symmetries of both
  graphs, we may assume that $f(i)=k_a$ for some
  $a\in\{0,\ldots,4k-1\}$. Note that the shortest odd cycles in
  $K_{4t/(2t-1)}$ are of length $2t+1$. Thus, $(C_{2t+1},i,j)$ must be
  mapped one-to-one. In fact, we must have
  $f(j)\in\{k_{a-2},k_{a-1},k_{a+1},k_{a+2}\}$ (where indices are
  taken modulo $4t$). Thus, the graph $K_{4t/(2t-1)}^*$ obtained from
  applying $(C_{2t+1},i,j)$ to $K_{4t/(2t-1)}$ is isomorphic to
  $C_{4t}^2$, and by Theorem~\ref{thm:indicator} the proof is
  complete.
%
%
%
\end{proof}

\section{Unbalanced even cycles of length at least $6$}\label{sec:uc2k}

We are now ready to prove that unbalanced even cycles of length at
least~$6$ define an NP-complete s-homomorphism problem for planar
graphs.

\begin{theorem}\label{thm:uc2k}
For every $k\ge3$, \Pshom{UC_{2k}} is NP-complete.
\end{theorem}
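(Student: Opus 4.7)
The plan is to combine Corollary~\ref{cor:permequiv} with the indicator construction from Theorem~\ref{thm:indicator}: since \Pshom{UC_{2k}} has the same complexity as \PHOM{\rho(UC_{2k})}, I will pick a planarity-preserving indicator $(I,i,j)$ such that the resulting graph $H^*$ has an NP-complete planar homomorphism problem. First I would describe $\rho(UC_{2k})$ explicitly: a direct trace through the definition of the switching graph shows that it has vertex set $\{w_0,w_1,\dots,w_{4k-1}\}$, that its positive edges form the Hamilton cycle $w_0w_1\cdots w_{4k-1}w_0$, and that each vertex $w_a$ has exactly two negative neighbours, namely $w_{a+2k-1}$ and $w_{a+2k+1}$ (indices taken modulo $4k$).

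I would then take $(I,i,j)$ to be the signed $4$-cycle with alternating signs $(+,-,+,-)$ and $i,j$ two antipodal vertices. The rotation by two is a sign-preserving automorphism of $I$ swapping $i$ and $j$, and replacing each edge of a planar graph by this gadget (placing the two new vertices on either side of the former edge) produces a planar signed graph. Unwinding the definition of $H^*$: a homomorphism of $I$ to $\rho(UC_{2k})$ sending $i\mapsto w_a$ and $j\mapsto w_b$ corresponds to a closed $4$-walk whose half from $i$ to $j$ uses one positive step followed by one negative step, yielding $b-a\equiv\pm 1\pm(2k-1)\pmod{4k}$. The four possible values reduce to the three residues $\{2k-2,\,2k,\,2k+2\}$ modulo $4k$, and a short check confirms that each of these is actually realised by some closing $4$-walk; hence $H^*$ is exactly the $4k$-vertex circulant graph $\mathrm{Cay}(\mathbb{Z}_{4k},\{2k-2,2k,2k+2\})$.

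Since all three generators are even, $H^*$ has two isomorphic connected components, one on the even residues and one on the odd. Dividing the labels by two in either coset identifies each component with the circulant $\mathrm{Cay}(\mathbb{Z}_{2k},\{k-1,k,k+1\})$, and I would now split on the parity of $k$. When $k\ge 4$ is even, $\gcd(2k,k-1)=1$ and this graph is precisely the circular clique $K_{2k/(k-1)}=K_{4t/(2t-1)}$ with $t=k/2\ge 2$, for which \PHOM{K_{4t/(2t-1)}} is NP-complete by Theorem~\ref{thm:cc}. When $k\ge 3$ is odd, the distance-$k$ edges form a perfect matching between the even and odd residues modulo $2k$, while the distance-$(k-1)$ and distance-$(k+1)$ edges coincide and form two disjoint $k$-cycles on these two classes; the resulting graph is isomorphic to the prism $C_k\square K_2$, which retracts onto $C_k$ and contains $C_k$ as a subgraph, so its core is $C_k$. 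Hence \PHOM{C_k} is NP-complete by~\cite{MS09} (or by Karp's theorem in the case $k=3$, where $C_3=K_3$).

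In every case \PHOM{H^*} is NP-complete, and pulling the reduction back through Theorem~\ref{thm:indicator} and Corollary~\ref{cor:permequiv} yields the claimed NP-completeness of \Pshom{UC_{2k}}. The displacement computation that produces $\{2k-2,2k,2k+2\}$ is routine; the main obstacle I anticipate is the structural identification of the component $\mathrm{Cay}(\mathbb{Z}_{2k},\{k-1,k,k+1\})$, which must be carefully recognised either as a circular clique (so that Theorem~\ref{thm:cc} applies) or as the odd-cycle prism $C_k\square K_2$ whose core is $C_k$, depending on the parity of $k$.
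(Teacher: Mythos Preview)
Your proposal is correct and follows essentially the same route as the paper: the same planarity-preserving alternating $4$-cycle indicator is applied to $\rho(UC_{2k})$, the resulting graph is identified as the circulant on $\mathbb{Z}_{4k}$ with connection set $\{2k-2,2k,2k+2\}$, and the two parity cases are handled via Theorem~\ref{thm:cc} (for $k$ even) and the odd cycle $C_k$ (for $k$ odd). The only cosmetic difference is that for odd $k$ you recognise the component explicitly as the prism $C_k\square K_2$ and invoke its retraction onto $C_k$, whereas the paper exhibits the $C_k$ subgraph and a retraction directly without naming the prism.
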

\begin{proof}
  We will equivalently show that \PHOM{\rho(UC_{2k})} is NP-complete.

  Let $u_0,\ldots,u_{4k-1}$ be the vertices of $\rho(UC_{2k})$. The
  positive edges of $\rho(UC_{2k})$ are $u_iu_{i+1}$ and the negative
  edges are $u_iu_{i+2k-1}$.

  Consider the indicator $(I,i,j)$ of Figure~\ref{fig1}. Clearly, it
  preserves planarity. Now, consider a homomorphism $f$ of $(I,i,j)$
  to $\rho(UC_{2k})$. By the symmetries of both graphs, we may assume
  that $f(i)=u_a$ for some $a\in\{0,\ldots,4k-1\}$. Then, we must have
  $f(j)\in\{u_{a+2k-2},u_{a+2k},u_{a+2k+2}\}$ (indices taken modulo
  $4k$).

  Thus, the graph $\rho(UC_{2k})^*$ obtained from applying $(I,i,j)$
  to $\rho(UC_{2k})$ is cubic. Notice that $\rho(UC_{2k})$ is
  bipartite, and that $(I,i,j)$ is bipartite too with $i$ and $j$ in
  the same part. It follows that $\rho(UC_{2k})^*$ will consist of at
  least two connected components. See Figure~\ref{fig:UV6-UC8} for a
  picture when $k=3,4$. We now distinguish two cases to determine
  $\rho(UC_{2k})^*$.

\medskip

\emph{Case 1: $k$ is odd.} In this case, $\rho(UC_{2k})^*$ contains a
copy of the cycle $C_{k}=\{c_0,\ldots,c_{k-1}\}$, where $c_0=u_0$ and
$c_{i+1}=u_{i+2k+2}$ when $i$ is even and $c_{i+1}=u_{i-(2k-2)}$ when
$i$ is odd (thus, $c_k=u_{2k-2}$). Furthermore, we claim that $C_k$ is
the core of $\rho(UC_{2k})^*$: indeed, consider the $k$ sets of
vertices $U_j=\{u_j,\ldots,u_{j+3}\}$ for $j=0\bmod 4$ and
$0\leq j<4k$. For $u_i\in U_j$, let $f(u_i)=c_j$: $f$ is a
homomorphism from $\rho(UC_{2k})^*$ to its subgraph $C_k$.

\medskip

\emph{Case 2: $k$ is even.} In this case, letting $k=2t$ with
$t\geq 2$, we have that $\rho(UC_{2k})^*$ consists of two copies of
the circular clique $K_{4t/(2t-1)}$: one on vertex set
$\{u_i~|~i=0\bmod 2\}$ and the other on vertex set
$\{u_i~|~i=1\bmod 2\}$. Thus, the core of $\rho(UC_{2k})^*$ is
$K_{4t/(2t-1)}$.

\medskip

In both cases, we have \PHOM{\rho(UC_{2k})^*} NP-complete:
by~\cite{HNT06,MS09} when $k$ is odd and by Theorem~\ref{thm:cc} when
$k$ is even. We thus apply Theorem~\ref{thm:indicator} to obtain a
reduction from \PHOM{\rho(UC_{2k})^*} to \PHOM{\rho(UC_{2k})}: this
completes the proof.
\end{proof}

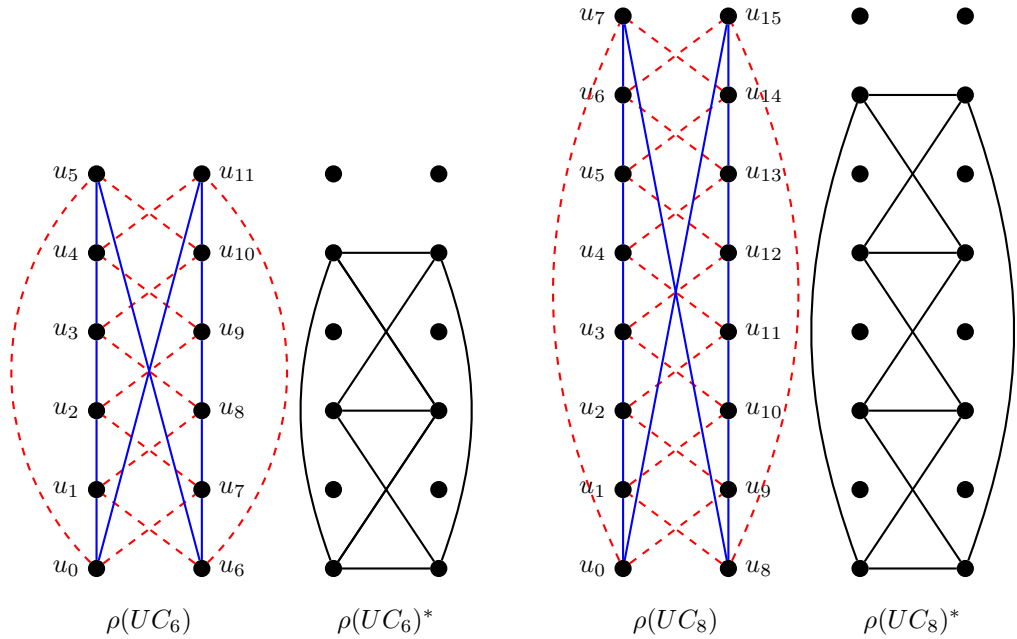
\begin{figure}[!htpb]
\begin{center}
\begin{tikzpicture}[every loop/.style={},scale=0.7]
\begin{scope}[xshift=0cm]
  \node[blackvertex] (u0) at (0,0) {};
  \draw (u0) node[left=1mm] {$u_0$};
  \node[blackvertex] (u1) at (0,1.5) {};
  \draw (u1) node[left=1mm] {$u_1$};
  \node[blackvertex] (u2) at (0,3) {};
  \draw (u2) node[left=1mm] {$u_2$};
  \node[blackvertex] (u3) at (0,4.5) {};
  \draw (u3) node[left=1mm] {$u_3$};
  \node[blackvertex] (u4) at (0,6) {};
  \draw (u4) node[left=1mm] {$u_4$};
  \node[blackvertex] (u5) at (0,7.5) {};
  \draw (u5) node[left=1mm] {$u_5$};
  \draw[thick,blue] (u0)--(u1)--(u2)--(u3)--(u4)--(u5);
  \draw[thick,red,dashed] (u0) to[bend left=45] (u5);

  \node[blackvertex] (u6) at (2,0) {};
  \draw (u6) node[right=1mm] {$u_6$};
  \node[blackvertex] (u7) at (2,1.5) {};
  \draw (u7) node[right=1mm] {$u_7$};
  \node[blackvertex] (u8) at (2,3) {};
  \draw (u8) node[right=1mm] {$u_8$};
  \node[blackvertex] (u9) at (2,4.5) {};
  \draw (u9) node[right=1mm] {$u_9$};
  \node[blackvertex] (u10) at (2,6) {};
  \draw (u10) node[right=1mm] {$u_{10}$};
  \node[blackvertex] (u11) at (2,7.5) {};
  \draw (u11) node[right=1mm] {$u_{11}$};
  \draw[thick,blue] (u6)--(u7)--(u8)--(u9)--(u10)--(u11);
  \draw[thick,red,dashed] (u6) to[bend right=45] (u11);
  
  \draw[thick,red,dashed] (u0)--(u7)--(u2)--(u9)--(u4)--(u11) (u6)--(u1)--(u8)--(u3)--(u10)--(u5);
  \draw[thick,blue] (u0)--(u11) (u5)--(u6);
  
  \node at (1,-1) {$\rho(UC_6)$};
\end{scope}

\begin{scope}[xshift=4.5cm]
  \node[blackvertex] (u0) at (0,0) {};
  \node[blackvertex] (u1) at (0,1.5) {};
  \node[blackvertex] (u2) at (0,3) {};
  \node[blackvertex] (u3) at (0,4.5) {};
  \node[blackvertex] (u4) at (0,6) {};
  \node[blackvertex] (u5) at (0,7.5) {};

  \node[blackvertex] (u6) at (2,0) {};
  \node[blackvertex] (u7) at (2,1.5) {};
  \node[blackvertex] (u8) at (2,3) {};
  \node[blackvertex] (u9) at (2,4.5) {};
  \node[blackvertex] (u10) at (2,6) {};
  \node[blackvertex] (u11) at (2,7.5) {};

  \draw[thick,black] (u0)--(u6)--(u2)--(u8)--(u0) (u2)--(u10)--(u4)--(u8);
  \draw[thick,black] (u0)--(u8)--(u4);
  \draw[thick,black] (u0) to[bend left=20] (u4);
  \draw[thick,black] (u6) to[bend right=20] (u10);
  \node at (1,-1) {$\rho(UC_6)^*$};
\end{scope}

\begin{scope}[xshift=10cm]
  \node[blackvertex] (u0) at (0,0) {};
  \draw (u0) node[left=1mm] {$u_0$};
  \node[blackvertex] (u1) at (0,1.5) {};
  \draw (u1) node[left=1mm] {$u_1$};
  \node[blackvertex] (u2) at (0,3) {};
  \draw (u2) node[left=1mm] {$u_2$};
  \node[blackvertex] (u3) at (0,4.5) {};
  \draw (u3) node[left=1mm] {$u_3$};
  \node[blackvertex] (u4) at (0,6) {};
  \draw (u4) node[left=1mm] {$u_4$};
  \node[blackvertex] (u5) at (0,7.5) {};
  \draw (u5) node[left=1mm] {$u_5$};
  \node[blackvertex] (u6) at (0,9) {};
  \draw (u6) node[left=1mm] {$u_6$};
  \node[blackvertex] (u7) at (0,10.5) {};
  \draw (u7) node[left=1mm] {$u_7$};
  \draw[thick,blue] (u0)--(u1)--(u2)--(u3)--(u4)--(u5)--(u6)--(u7);
  \draw[thick,red,dashed] (u0) to[bend left=25] (u7);

  \node[blackvertex] (u8) at (2,0) {};
  \draw (u8) node[right=1mm] {$u_8$};
  \node[blackvertex] (u9) at (2,1.5) {};
  \draw (u9) node[right=1mm] {$u_9$};
  \node[blackvertex] (u10) at (2,3) {};
  \draw (u10) node[right=1mm] {$u_{10}$};
  \node[blackvertex] (u11) at (2,4.5) {};
  \draw (u11) node[right=1mm] {$u_{11}$};
  \node[blackvertex] (u12) at (2,6) {};
  \draw (u12) node[right=1mm] {$u_{12}$};
  \node[blackvertex] (u13) at (2,7.5) {};
  \draw (u13) node[right=1mm] {$u_{13}$};
  \node[blackvertex] (u14) at (2,9) {};
  \draw (u14) node[right=1mm] {$u_{14}$};
  \node[blackvertex] (u15) at (2,10.5) {};
  \draw (u15) node[right=1mm] {$u_{15}$};
  
  \draw[thick,blue] (u8)--(u9)--(u10)--(u11)--(u12)--(u13)--(u14)--(u15);
  \draw[thick,red,dashed] (u8) to[bend right=25] (u15);
  
  \draw[thick,red,dashed] (u0)--(u9)--(u2)--(u11)--(u4)--(u13)--(u6)--(u15) (u8)--(u1)--(u10)--(u3)--(u12)--(u5)--(u14)--(u7);
  \draw[thick,blue] (u0)--(u15) (u7)--(u8);
  
  \node at (1,-1) {$\rho(UC_8)$};
\end{scope}

\begin{scope}[xshift=14.5cm]
  \node[blackvertex] (u0) at (0,0) {};
  \node[blackvertex] (u1) at (0,1.5) {};
  \node[blackvertex] (u2) at (0,3) {};
  \node[blackvertex] (u3) at (0,4.5) {};
  \node[blackvertex] (u4) at (0,6) {};
  \node[blackvertex] (u5) at (0,7.5) {};
  \node[blackvertex] (u6) at (0,9) {};
  \node[blackvertex] (u7) at (0,10.5) {};

  \node[blackvertex] (u8) at (2,0) {};
  \node[blackvertex] (u9) at (2,1.5) {};
  \node[blackvertex] (u10) at (2,3) {};
  \node[blackvertex] (u11) at (2,4.5) {};
  \node[blackvertex] (u12) at (2,6) {};
  \node[blackvertex] (u13) at (2,7.5) {};
  \node[blackvertex] (u14) at (2,9) {};
  \node[blackvertex] (u15) at (2,10.5) {};

  \draw[thick,black] (u0)--(u8)--(u2)--(u10) (u0) (u2)--(u12)--(u4) (u6)--(u12);
  \draw[thick,black] (u0)--(u10)--(u4)--(u14)--(u6);
  \draw[thick,black] (u0) to[bend left=20] (u6);
  \draw[thick,black] (u8) to[bend right=20] (u14);
  \node at (1,-1) {$\rho(UC_8)^*$};
\end{scope}

\end{tikzpicture}
\end{center}
\caption{The switching graphs $\rho(UC_6)$ and $\rho(UC_8)$, and the undirected graphs $\rho(UC_6)^*$ and $\rho(UC_8)^*$ resulting from the application of the indicator $(I,i,j)$ from Figure~\ref{fig1} (in which we only draw one of the two isomorphic components). The core of $\rho(UC_6)^*$ is $K_3$ and the core of $\rho(UC_8)^*$ is $K_{8/3}$, the Wagner graph.}
\label{fig:UV6-UC8}
\end{figure}

\section{Unbalanced cycles of length 4}\label{sec:uc4}

The proof of Theorem~\ref{thm:uc2k} fails to show that \PHOM{UC_4} is
NP-complete since $\rho(UC_4^*)$ is $K_4$ and thus this would require
\PHOM{K_4} to be NP-complete, which is false by the Four Colour
Theorem. However, we give an ad-hoc reduction in this section.

\begin{theorem}\label{thm:uc4}
\Pshom{UC_4} is NP-complete.
\end{theorem}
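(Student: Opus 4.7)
Since the reduction of Theorem~\ref{thm:uc2k} via the indicator $(I,i,j)$ of Figure~\ref{fig1} would only yield a reduction from \PHOM{K_4} to \Pshom{UC_4}, and \PHOM{K_4} is polynomial-time solvable by the Four Color Theorem, a fresh approach is required. My plan is to design an ad-hoc reduction from a planar NP-hard problem, such as \PHOM{K_3} (planar $3$-colouring), to \Pshom{UC_4} via a careful gadget construction.

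Concretely, given a planar graph $G$, I would build a planar signed graph $G'$ such that $G$ is $3$-colourable if and only if $G' \sto UC_4$. By Corollary~\ref{cor:permequiv}, this is equivalent to finding a standard edge-coloured homomorphism from $G'$ into $\rho(UC_4)$, the signed graph on $K_{4,4}$ with the specific sign pattern shown in Figure~\ref{fig1}. The construction would rely on two families of planar signed gadgets: a \emph{vertex gadget} attached to each vertex $v$ of $G$ with a distinguished terminal $t_v$, which would force the image $\phi(t_v)$ to lie in a fixed three-element subset $T \subseteq V(\rho(UC_4))$ encoding the three colour classes; and an \emph{edge gadget} attached to each edge $uv$ of $G$ and identified with $t_u$ and $t_v$, which would enforce $\phi(t_u) \neq \phi(t_v)$ whenever both images lie in $T$.

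Designing these gadgets exploits the sign structure of $\rho(UC_4)$: the underlying graph $K_{4,4}$ is highly symmetric and by itself imposes no constraint beyond bipartiteness, so the gadgets must use carefully arranged configurations of positive and negative edges to break this symmetry and concentrate images into the three-element target set. A natural starting point for the vertex gadget is a small signed cycle through the terminal whose sign pattern is only compatible with three of the eight possible images; for the edge gadget, an alternating-sign short path or cycle between $t_u$ and $t_v$ can enforce a distinct-image constraint. Planarity of $G'$ would be preserved by using a planar embedding of $G$ and attaching each gadget locally.

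The main obstacle will be validating the gadgets: the vertex gadget must permit each of the three target images (completeness) while blocking all other images (soundness), and the edge gadget must rule out only the same-image case. Because $UC_4$ has only four vertices and rich switching symmetries, this requires an intricate case analysis across the eight vertices of $\rho(UC_4)$ and their sign-orbit structure. Once the gadgets are verified, the two directions of the reduction together with the planarity of $G'$ would follow routinely.
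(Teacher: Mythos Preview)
Your high-level plan---reduce \PHOM{K_3} to \PHOM{\rho(UC_4)} via planar vertex and edge gadgets---is exactly the route the paper takes. However, your proposed encoding has a fatal obstruction. You want a vertex gadget with a single terminal $t_v$ whose image is forced into a fixed three-element set $T\subseteq V(\rho(UC_4))$. This is impossible: $\rho(UC_4)$ is vertex-transitive (the cyclic shift $u_i\mapsto u_{i+1}$ is a sign-preserving automorphism, since positive edges $u_iu_{i+1}$ and negative edges $u_iu_{i+3}$ are both preserved). Hence if any homomorphism of your gadget sends $t_v$ to some $u_j$, composing with shifts sends $t_v$ to every vertex of $\rho(UC_4)$. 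Even restricting to one bipartite class, the shift by~$2$ acts transitively on $\{u_0,u_2,u_4,u_6\}$, so no single-terminal gadget can pin the image down to three of these four vertices. Your suggested ``small signed cycle whose sign pattern is only compatible with three of the eight possible images'' therefore cannot exist.

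The paper resolves exactly this difficulty by encoding each colour not as an absolute image but as a \emph{difference}: each vertex $v$ of $G$ is represented by a pair of ports $(v_0,v_1)$, and the colour of $v$ is $\varphi(v_1)-\varphi(v_0)\bmod 8\in\{2,4,6\}$, a quantity invariant under the cyclic automorphism. The vertex gadget (built from auxiliary ``copy'', ``split'', and ``crossing'' sub-gadgets) propagates this pair around so that every port pair reads the same difference, and the edge gadget forces adjacent vertices to share the same $\varphi(v_0)$ (the ``ground'') but have distinct differences. The need to propagate both a ground and a colour through a planar gadget, and to cross them, is what drives the paper's relatively elaborate construction; a single-terminal scheme like yours cannot get off the ground.
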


As before, we show equivalently that \PHOM{\rho(UC_4)} is
NP-complete. It is easy to see that \PHOM{\rho(UC_4)} is in NP. We
thus focus on proving NP-hardness. Recall that $\rho(UC_4)$ is the
circulant graph with vertices $u_0,\ldots,u_7$, positive edges
$u_iu_{i+1}$ and negative edges $u_iu_{i+3}$ (it is depicted in
Figure~\ref{fig1}, see also Figure~\ref{fig:rho(UC_4)} for a symmetric drawing). We reduce \PHOM{K_3} (that is, \textsc{Planar $3$-Colouring}), which is
NP-complete~\cite{GJS76}, to \PHOM{\rho(UC_4)}. In the following, $G$
is an instance of \PHOM{K_3}. Our goal is to construct a planar signed
graph $H$ such that $G$ is $3$-colourable if and only if $H$ is a
positive instance of \PHOM{\rho(UC_4)}. We assume that $G$ is
connected, otherwise, we apply our construction to each connected
component.

\begin{figure}[!htpb]
\begin{center}
\scalebox{1.0}{\begin{tikzpicture}[every loop/.style={},scale=0.7]
\node[blackvertex](u0) at (0:2) {};
\node at (0:2.5) {$u_0$};
\node[blackvertex](u1) at (45:2) {};
\node at (45:2.5) {$u_1$};
\node[blackvertex](u2) at (90:2) {};
\node at (90:2.5) {$u_2$};
\node[blackvertex](u3) at (135:2) {};
\node at (135:2.5) {$u_3$};
\node[blackvertex](u4) at (180:2) {};
\node at (180:2.5) {$u_4$};
\node[blackvertex](u5) at (225:2) {};
\node at (225:2.5) {$u_5$};
\node[blackvertex](u6) at (270:2) {};
\node at (270:2.5) {$u_6$};
\node[blackvertex](u7) at (315:2) {};
\node at (315:2.5) {$u_7$};

\draw[thick,red,dashed] (u0)--(u3)--(u6)--(u1)--(u4)--(u7)--(u2)--(u5)--(u0);

\draw[thick,blue] (u0)--(u1)--(u2)--(u3)--(u4)--(u5)--(u6)--(u7)--(u0);

\end{tikzpicture}}
\end{center}
\caption{The signed graph $\rho(UC_4)$.}
\label{fig:rho(UC_4)}
\end{figure}
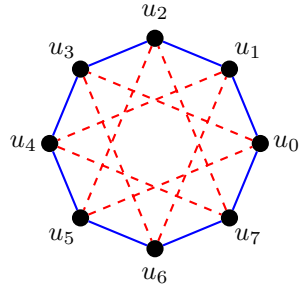

\subsection{Outline of the reduction}
We first present the generic ideas of our reduction. Note that the
graph $\rho(UC_4)$ is bipartite. Therefore, $H$ will be bipartite,
otherwise there is no homomorphism from $H$ to $\rho(UC_4)$. In our
construction of $H$, the important vertices belong to the same
partite set. All of the vertices that will be named in our
gadgets will always belong to that partite set. By the symmetry of
$\rho(UC_4)$, we can assume that, in every homomorphism from one of
our graphs to $\rho(UC_4)$, all of the named vertices will always be
mapped to some $u_i$ with $i = 0 \bmod 2$.

For each vertex $v$ of degree $d$ in $G$, we will create a gadget $G_v$ in $H$
with $2d$ special vertices $v_0,\cdots v_{2d-1}$ (called \emph{ports})
such that there is an embedding of that gadget in the plane where
$v_0,\cdots v_{2d-1}$ are on a facial trail in that order. We want
that any homomorphism $\varphi$ from $H$ to $\rho(UC_4)$ satisfies:
\begin{enumerate}
\item \label{cond:3col} $\varphi(v_0)\neq\varphi(v_1)$
\item \label{cond:period} for $i=0,1$,
  $\varphi(v_i)=\varphi(v_{(i\bmod 2)})$.
\end{enumerate}

We will describe later how to enforce these conditions. Assuming that
they are satisfied, let $\varphi$ be any homomorphism from $H$ to
$\rho(UC_4)$. Let $\varphi(v_1)=u_l$ and $\varphi(v_0)=u_k$. The
difference $l-k \bmod 8$ will represent the \emph{colour} of $v$ in a
valid 3-colouring of $G$. For the sake of readability, we identify
$u_k$ with its index $k$, so that we can read the colour of $v$ by the
operation $\varphi(v_1)-\varphi(v_0)\bmod 8$. We shall call
$\varphi(v_0)$ the \emph{ground} of $v$ and
$\varphi(v_1)-\varphi(v_0)$ its \emph{colour}. Note that
Condition~\ref{cond:3col} ensures that there are only three possible
colours for $v$: $2$, $4$, and $6$. Condition~\ref{cond:period}
asserts that the colour of $v$ propagates $d$ times throughout the
vertex gadget as
$\varphi(v_{2i+1})-\varphi(v_{2i})\bmod 8, i\in\{0, \ldots, d-1\}$,
allowing us to use any pair $(v_{2i},v_{2i+1})$ for retrieving it.

Recall that we want $H$ to be a positive instance of \PHOM{\rho(UC_4)}
if and only if $G$ is $3$-colourable. Therefore, we also want to
ensure that any homomorphism $H\to \rho(UC_4)$ assigns different
colours to each pair of adjacent vertices in $G$. We thus construct
$H$ such that the following condition is satisfied:
\begin{enumerate}
\setcounter{enumi}{2}
\item\label{cond:proper} For each homomorphism
  $\varphi: H\to \rho(UC_4)$, two adjacent vertices in $G$ receive the
  same ground, and different colours.
\end{enumerate}

Note that if we manage to construct $H$ such that the three conditions
hold, then it is easy to recover a proper $3$-colouring of $G$ from
any homomorphism $H\to \rho(UC_4)$. In the following subsections, we
describe our gadgets, and we prove that the previous conditions hold.

\subsection{Construction of the gadgets}

Tu build our vertex gadget, we need to start with several smaller gadgets. We start with a first construction allowing to make copies of a vertex
that are mapped to the same image under any homomorphism, called \emph{copy gadget} (see Figure~\ref{fig:cross_aux}).

  \begin{figure}[!ht]
    \centering
    \begin{tikzpicture}[thick]
      \node[smallblack][label=left:{$x_1$}] (01) at  (0,1){};
      \node[smallblack] (02) at (0,2) {};
      \node[smallblack][label=below:{$y_2$}] (10) at (1,0){};
      \node[smallblack] (11) at (1,1){};
      \node[smallblack][label=above:{$y_1$}] (12) at (1,2){};
      \node[smallblack] (20) at (2,0){};
      \node[smallblack][label=right:{$x_2$}] (21) at (2,1){};
      \node[smallblack] (22) at  (2,2){};
      \draw[blue] (01) -- (02) -- (22) -- (20) -- (10);
      \draw[blue] (10) -- (12);
      \draw[red,dashed] (01) -- (11) -- (21);
    
      \tikzset{xshift=5cm,yshift=2cm}
      \draw (0,0) -- (2,0) -- (2,-2) -- (0,-2) -- (0,0);
      \draw (1,0.5) -- (1,0);
      \draw (-0.5,-1) -- (0,-1);
      \draw (2.5,-1) -- (2,-1);
      \draw (1,-2.5) -- (1,-2);
      \node at (1,-0.2) {$y_1$};
      \node at (0.2,-1) {$x_1$};
      \node at (1.8,-1) {$x_2$};
      \node at (1,-1.8) {$y_2$};
      \node at (1,-1) {\Huge $=$};
    \end{tikzpicture}
    \caption{Copy gadget and its schematic representation.}
    \label{fig:cross_aux}
  \end{figure}
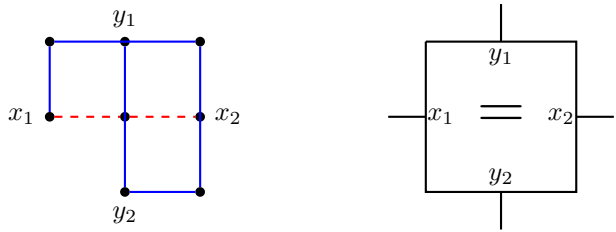

\begin{lemma}  \label{lem:cross_aux}\label{lemma:copy}
  Let $\varphi$ be a homomorphism from the copy gadget from
  Figure~\ref{fig:cross_aux} to $\rho(UC_4)$. 

  Then, $\varphi(x_1)=\varphi(x_2)$, $\varphi(y_1)=\varphi(y_2)$ and
  $\varphi(y_1)-\varphi(x_1)\in \{\pm 2\}$.

  Conversely, if we partially fix $\varphi$ from $\{x_1,x_2,y_1,y_2\}$ to
  $\{u_0,u_2,u_4,u_6\}$ such that $\varphi(x_1)=\varphi(x_2)$,
  $\varphi(y_1)=\varphi(y_2)$ and $\varphi(y_1)-\varphi(x_1)\in \{\pm 2\}$,
  one can extend $\varphi$ to a homomorphism from the copy gadget to $\rho(UC_4)$.
\end{lemma}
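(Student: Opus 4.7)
My plan is to prove both directions by a finite case analysis driven by the local structure of $\rho(UC_4)$.

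First, I would exploit the vertex-transitivity of $\rho(UC_4)$: composing $\varphi$ with an appropriate automorphism, I may assume $\varphi(x_1) = u_0$. The graph $\rho(UC_4)$ is bipartite with parts $\acc{u_0,u_2,u_4,u_6}$ and $\acc{u_1,u_3,u_5,u_7}$, and the copy gadget is bipartite as well. Combined with the convention from the outline that all named vertices map into $\acc{u_0,u_2,u_4,u_6}$, this bipartition restricts the candidates for the named vertices to the even-indexed part and immediately discards any parity-violating extension.

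For the forward direction, I would proceed edge by edge through the gadget. In $\rho(UC_4)$, every vertex has exactly two positive neighbors (its circular $\pm 1$ neighbors) and exactly two negative neighbors (its circular $\pm 3$ neighbors), so each edge of the gadget offers only two candidate images for its farther endpoint. Starting from $\varphi(x_1)=u_0$, propagating along the red branch $x_1$–$(1,1)$–$x_2$ constrains $\varphi(x_2)$, while propagating along the blue branch from $x_1$ constrains the internal blue-path vertices and finally $\varphi(y_2)$ and $\varphi(y_1)$. A short case check, with most cases collapsed by the natural symmetries of the gadget and of $\rho(UC_4)$ (rotation by $4$ and reflection), shows that the only globally consistent assignments are exactly those with $\varphi(x_1)=\varphi(x_2)$, $\varphi(y_1)=\varphi(y_2)$, and $\varphi(y_1)-\varphi(x_1)\in\acc{\pm 2}$.

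For the converse, I would exhibit an explicit extension. Given an admissible partial assignment of $\acc{x_1,x_2,y_1,y_2}$ into $\acc{u_0,u_2,u_4,u_6}$, I assign compatible images to the internal vertices by, at each branching point, selecting the appropriate $\pm 1$ (for blue) or $\pm 3$ (for red) circular neighbor, and then verify edge by edge that every edge of the gadget is mapped to an edge of the correct sign in $\rho(UC_4)$. By the symmetries, it suffices to treat one representative, say $\varphi(x_1)=\varphi(x_2)=u_0$ and $\varphi(y_1)=\varphi(y_2)=u_2$, from which the general case follows by composing with automorphisms of $\rho(UC_4)$.

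The main obstacle is keeping the case analysis compact rather than sprawling. I expect it to remain tractable thanks to (i) the normalization $\varphi(x_1)=u_0$, (ii) the parity argument that discards half of the naive candidates at once, and (iii) the gadget's symmetry swapping the two blue neighbors at each branching point, which collapses parallel cases.
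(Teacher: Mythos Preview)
Your plan is sound: a finite case analysis normalised by vertex-transitivity will work, and the bipartition and symmetry reductions you list genuinely cut the work down. The paper argues along similar lines but with one structural shortcut you did not exploit. Rather than propagating edge by edge, it first observes that the copy gadget is built from three overlapping copies of $UC_4$ (each a $4$-cycle with three positive edges and one negative edge), all sharing the central vertex $(1,1)$. It then records the key fact that any homomorphism from a single $UC_4$ into $\rho(UC_4)$ is injective and, once the image of one vertex is fixed, has exactly two completions (corresponding to the two cyclic orientations). Applied to the copy of $UC_4$ containing $x_1$ and $y_1$, this immediately yields $\varphi(y_1)\in\{u_2,u_6\}$ from $\varphi(x_1)=u_0$ and also pins down $\varphi((1,1))$; the two remaining copies of $UC_4$ then force $\varphi(x_2)=u_0$ and $\varphi(y_2)=\varphi(y_1)$ with no further branching. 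The converse falls out of the same description.

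The advantage of the paper's route is that the $UC_4$-rigidity observation does all the work at once, so essentially no residual case analysis remains. Your edge-by-edge propagation reaches the same conclusion but must track several live choices simultaneously: for instance, two negative steps from $u_0$ can a priori land at $u_0$, $u_2$, or $u_6$, and it is only the cycle constraints of the gadget (not the ``red branch'' viewed as a bare path) that eliminate the spurious options. Both arguments are correct; the paper simply packages the constraint more cleanly by naming the rigid substructure responsible for it.
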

\begin{proof}
  The first thing to notice is that in a homomorphism from a copy of
  $UC_4$ to $\rho(UC_4)$, each vertex maps to a distinct vertex, and
  the only ways to do it are shown in Figure~\ref{fig:UC4}.

  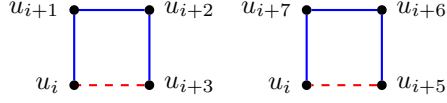
\begin{figure}[!ht]
    \centering
    \begin{tikzpicture}[thick]
      \node[smallblack][label=left:{$u_{i+1}$}] (01) at  (0,1){};
      \node[smallblack][label=right:{$u_{i+3}$}] (10) at (1,0){};
      \node[smallblack][label=right:{$u_{i+2}$}] (11) at (1,1){};
      \node[smallblack][label=left:{$u_i$}] (00) at  (0,0){};
      \draw[blue] (00) -- (01) -- (11) -- (10);
      \draw[red,dashed] (10) -- (00);
    \end{tikzpicture}
    \begin{tikzpicture}[thick]
      \node[smallblack][label=left:{$u_{i+7}$}] (01) at  (0,1){};
      \node[smallblack][label=right:{$u_{i+5}$}] (10) at (1,0){};
      \node[smallblack][label=right:{$u_{i+6}$}] (11) at (1,1){};
      \node[smallblack][label=left:{$u_i$}] (00) at  (0,0){};
      \draw[blue] (00) -- (01) -- (11) -- (10);
      \draw[red,dashed] (10) -- (00);
    \end{tikzpicture}
    \caption{The only possible homomorphisms for a copy of $UC_4$. The indices are taken modulo~$8$.}
    \label{fig:UC4}
  \end{figure}

  Given the image of a vertex, there are two possibilities for the
  remainder of the copy of $UC_4$, and given the images of two vertices,
  there is at most one way to complete the homomorphism. In a
  homomorphism from the gadget to $\rho(UC_4)$, assume without loss of generality that $x_1$ maps to
  $u_0$. The two possibilities to map the vertices of the copy of
  $UC_4$ containing $x_1$ lead to $y_1$ being mapped either to $u_2$
  or $u_6$. Then, the mapping of the remainder of the vertices is forced,
  leading to $x_2$ being mapped to $u_0$ and $y_2$ to the same vertex
  as $y_1$. The symmetries of $UC_4$ complete the proof of the lemma.
\end{proof}

Given a vertex $v$ in $G$, we want to use the copy gadget to ensure that
Condition~\ref{cond:period} holds in the vertex gadget $G_v$, by identifying $x_1$ with
$v_{2i}$ and $x_2$ with $v_{2i+2}$ for $i=0,\ldots,d-1$ (indices are
taken modulo $2d$).  However, we also need to have a copy of the
ground between each $x_i$.

To this end, we have to design a crossing-type gadget. Observe that the
copy gadget allows to cross two images as soon as they differ by
$\pm 2$.

We thus need to find a gadget that can handle the case where the
difference is~$4$. To this end, we introduce the \emph{split gadget} from Figure~\ref{fig:split}, which
allows to encode a difference of $\pm 2$ or $\pm 4$ using two differences
of $\pm 2$.

  \begin{figure}[!ht]
    \centering
    \begin{tikzpicture}[thick]
      \node[smallblack,label=above:{$x_1$}] (i) at (0,0) {};
      \node[smallblack,label=left:{$x_2$}] (o1) at (0,-2) {};
      \node[smallblack,label=above:{$y_2$}] (o2) at (2,0) {};
      \node[smallblack,label=left:{$g_1$}] (g1) at (-2,0) {};
      \node[smallblack,label=right:{$g_2$}] (g2) at (1,-3){};
      \node[smallblack,label=below:{$g_3$}] (g3) at (3,-1) {};
      \node[smallblack] (ii) at (1,-1){};
      \node[smallblack] (io1) at (0,-1){};
      \node[smallblack] (io2) at (1,0) {};
      \node[smallblack] (o2g3) at (2,-1) {};
      \node[smallblack] (o2g3b) at (3,0){};
      \node[smallblack] (o1g2) at (0,-3){};
      \node[smallblack] (o1g2b) at (1,-2) {};
      \node[smallblack] (ig1) at (-1,0) {};
      \draw[blue] (ig1) -- (i) -- (io1) -- (o1) -- (o1g2) -- (g2);
      \draw[blue] (o1) -- (o1g2b);
      \draw[blue] (o2g3b) -- (o2) -- (o2g3) -- (g3);
      \draw[blue] (i) -- (io2) -- (o2);
      \draw[blue] (io1) -- (ii);
      \draw[red,dashed] (o1g2b) -- (g2);
      \draw[red,dashed] (o2g3b) -- (g3);
      \draw[red,dashed] (io2) -- (ii);
      \draw[red,dashed] (g1) -- (ig1);


      
      \tikzset{xshift=5cm}
      \draw (0,0) -- (2,0) -- (2,-2) -- (0,-2) -- (0,0);
      \draw (0.66,0.5) -- (0.66,0);
      \draw (1.33,0.5) -- (1.33,0);
      \draw (-0.5,-0.66) -- (0,-0.66);
      \draw (-0.5,-1.33) -- (0,-1.33);
      \draw (2.5,-0.66) -- (2,-0.66);
      \draw (2.5,-1.33) -- (2,-1.33);
      \node at (0.66,-0.2) {$g_1$};
      \node at (1.33,-0.2) {$x_1$};
      \node at (0.2,-0.66) {$x_2$};
      \node at (0.2,-1.33) {$g_2$};
      \node at (1.8,-0.66) {$y_2$};
      \node at (1.8,-1.33) {$g_3$};
      \node at (1,-1) {\Huge $ \curlywedge$};
    \end{tikzpicture}
    \caption{Split gadget and its schematic representation.}
    \label{fig:split}
  \end{figure}
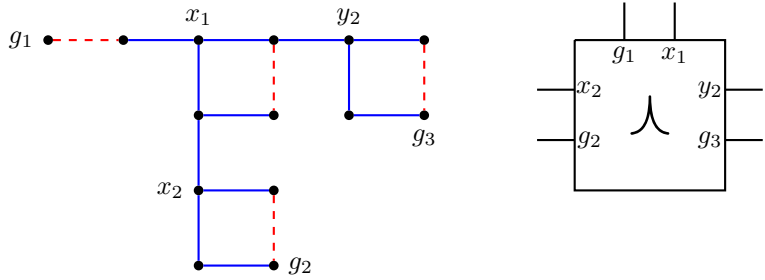

  \begin{lemma}\label{lemma:split}
     Let $\varphi$ be a homomorphism from the split gadget from
  Figure~\ref{fig:split} to $\rho(UC_4)$ such that
  $\varphi(g_1)=\varphi(g_2)=\varphi(g_3)$.

  Then, $\varphi(x_2)-\varphi(g_2)$ and $\varphi(y_2)-\varphi(g_3)$ lie
  in $\{\pm 2\}$. Moreover, $\varphi(x_1)-\varphi(g_1)=4$ if and only
  if $\varphi(x_2)\neq \varphi(y_2)$, and
  $\varphi(x_1)-\varphi(g_1)\in\{\pm 2\}$ if and only if
  $\varphi(x_2)=\varphi(y_2)=\varphi(x_1)$.

  Conversely, if we partially fix $\varphi$ from $\{x_2,y_2,x_1,g_1,g_2,g_3\}$
  to $\{u_0,u_2,u_4,u_6\}$ such that
  $\varphi(g_1)=\varphi(g_2)=\varphi(g_3)$,
  $\varphi(x_2)-\varphi(g_2)$ and $\varphi(y_2)-\varphi(g_3)$ lie in
  $\{\pm 2\}$, $\varphi(x_1)-\varphi(g_1)=4$ if and only if
  $\varphi(x_2)\neq \varphi(y_2)$, and
  $\varphi(x_1)-\varphi(g_1)\in\{\pm 2\}$ if and only if
  $\varphi(x_2)=\varphi(y_2)=\varphi(x_1)$, then one can extend $\varphi$ to
  a homomorphism from the split gadget to
  $\rho(UC_4)$.
\end{lemma}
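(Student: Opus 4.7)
The plan is to reduce the analysis to three subgadgets, each of which is a $4$-cycle switching-equivalent to $UC_4$. First I observe that the split gadget contains three such $4$-cycles: $C_i$ on $\{x_1, io_1, ii, io_2\}$ with unique negative edge between $ii$ and $io_2$; $C_{o_1}$ on $\{x_2, o_1g_2, g_2, o_1g_{2b}\}$ with unique negative edge between $g_2$ and $o_1g_{2b}$; and $C_{o_2}$ on $\{y_2, o_2g_3, g_3, o_2g_{3b}\}$ with unique negative edge between $g_3$ and $o_2g_{3b}$. Beyond these, the structure contains a length-$2$ signed path from $g_1$ to $x_1$ via $ig_1$ (with signs negative, then positive), together with two all-positive paths of length~$2$ from $x_1$ to $x_2$ (through $io_1$) and from $x_1$ to $y_2$ (through $io_2$). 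The strategy is to constrain the three cycles first, then propagate via the connecting paths.

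By a direct case analysis analogous to Lemma~\ref{lemma:copy} (relying on the enumeration from Figure~\ref{fig:UC4} of homomorphisms $UC_4 \to \rho(UC_4)$), any $UC_4$-type $4$-cycle sends its four vertices to four consecutive vertices of $\rho(UC_4)$ in the cyclic order; in particular, the two non-adjacent (\emph{opposite}) vertices of such a cycle map to elements at distance exactly~$2$ in $\rho(UC_4)$. Applied to $C_{o_1}$ and $C_{o_2}$, this immediately yields $\varphi(x_2) - \varphi(g_2) \in \{\pm 2\}$ and $\varphi(y_2) - \varphi(g_3) \in \{\pm 2\}$. Assuming without loss of generality that $\varphi(g_1) = \varphi(g_2) = \varphi(g_3) = u_0$ (by vertex-transitivity of $\rho(UC_4)$), we get $\varphi(x_2), \varphi(y_2) \in \{u_2, u_6\}$; and the signed path from $g_1$ to $x_1$ realizes exactly the differences $\{\pm 2, 4\}$ in $\rho(UC_4)$, so $\varphi(x_1) \in \{u_2, u_4, u_6\}$.

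Next I split on $\varphi(x_1)$. If $\varphi(x_1) \in \{u_2, u_6\}$, the blue-blue path from $x_1$ to $x_2$ forces $\varphi(x_2) \in \{\varphi(x_1) - 2, \varphi(x_1), \varphi(x_1) + 2\}$; intersecting with $\{u_2, u_6\}$ gives $\varphi(x_2) = \varphi(x_1)$, and symmetrically $\varphi(y_2) = \varphi(x_1)$. If $\varphi(x_1) = u_4$, then $\varphi(io_1), \varphi(io_2)$ both lie in the blue neighbourhood $\{u_3, u_5\}$ of $u_4$; applying the same $4$-cycle analysis to $C_i$ also forces $\varphi(io_1) - \varphi(io_2) \in \{\pm 2\}$, hence $\{\varphi(io_1), \varphi(io_2)\} = \{u_3, u_5\}$. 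Then $\varphi(x_2)$ is the unique blue neighbour of $\varphi(io_1)$ lying in $\{u_2, u_6\}$, and $\varphi(y_2)$ is the unique blue neighbour of $\varphi(io_2)$ lying in $\{u_2, u_6\}$; these are distinct. This proves both directions of the equivalences in the forward statement.

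For the converse, given any admissible assignment to the six ports, I would reverse the analysis to exhibit an explicit extension: pick $\varphi(ig_1)$ as a red neighbour of $\varphi(g_1)$ that is also a blue neighbour of $\varphi(x_1)$ (possible since the realised differences $\{\pm 2, 4\}$ cover every admissible value); choose $\varphi(io_1)$ and $\varphi(io_2)$ as common blue neighbours of the appropriate port-pairs, with $\varphi(ii)$ then fixed by the $C_i$-cycle completion; finally, extend each outer cycle $C_{o_1}$ and $C_{o_2}$ in the unique way compatible with its two ports being at distance~$2$. The main obstacle is the bookkeeping in the $\varphi(x_1) = u_4$ case: verifying that the intersecting constraints (the $C_i$-cycle, the blue-blue paths, and the forced values $\varphi(x_2), \varphi(y_2) \in \{u_2, u_6\}$) truly force $\varphi(x_2) \neq \varphi(y_2)$; all the other cases and the converse reduce to routine verification.
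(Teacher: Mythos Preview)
Your proof is correct and follows essentially the same approach as the paper: decompose the gadget into three $UC_4$-type $4$-cycles plus connecting paths, normalize $\varphi(g_i)=u_0$ by vertex-transitivity, and case-split on $\varphi(x_1)\in\{u_2,u_4,u_6\}$. Your treatment of the case $\varphi(x_1)\in\{u_2,u_6\}$ via the blue--blue path is in fact slightly sharper than the paper's, which argues through the inner $UC_4$ cycle and only concludes $\varphi(x_2)=\varphi(y_2)\in\{u_2,u_6\}$ rather than the full equality $\varphi(x_2)=\varphi(y_2)=\varphi(x_1)$ asserted in the lemma.
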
  

\begin{proof}
  Consider a homomorphism $\varphi$ from the gadget to $\rho(UC_4)$.
  Say that $g_1$, $g_2$, and $g_3$ are mapped to $u_0$. Similarly to
  the proof of Lemma~\ref{lem:cross_aux}, $x_2$ and $y_2$ can each only be mapped to $u_2$
  or $u_6$. Furthermore, $x_1$ can only be mapped to $u_2$, $u_4$ or
  $u_6$. Now, if $x_1$ is mapped to $u_4$, then its two neighbours in
  its copy of $UC_4$ are mapped one to $u_3$ and one to $u_5$, forcing
  $x_2$ and $y_2$ to be mapped one to $u_2$ and the other to $u_6$.
  If $x_1$ is mapped to $u_2$, then its two neighbours in its copy of
  $UC_4$ are mapped one to $u_1$ and one to $u_3$, forcing $x_2$ and
  $y_2$ to be mapped either both to $u_2$ or both to $u_6$. The case
  when $x_1$ is mapped to $u_6$ is similar. If $g_1$, $g_2$, and
  $g_3$ are mapped to another $u_i$, then the symmetries of
  $\rho(UC_4)$ yield the result.

  To prove the converse, one can check that every time we specified
  that something was forced above, there actually existed a
  homomorphism that verified the conditions.
\end{proof}

Note that the split gadget can be used in two ways: for splitting an image
in $\{\pm 2, \pm 4\}$ into two images in $\{\pm 2\}$, but also ``backwards''
for combining two images in $\{\pm 2\}$ into an image in
$\{\pm 2, \pm 4\}$. Thus, by combining the copy gadget and the split gadget, we obtain our crossing gadget, that allows a crossing between two images
that differ from $\pm 2$ or $\pm 4$, see Figure~\ref{fig:cross}. It is clear that the result is planar. Notice that the six vertices towards the center (the lower vertices of the upper split gadget, the upper vertices of the lower split gadget, the right vertex of the left copy gadget, and the left vertex of the right copy gadget) are all identified into one vertex.

  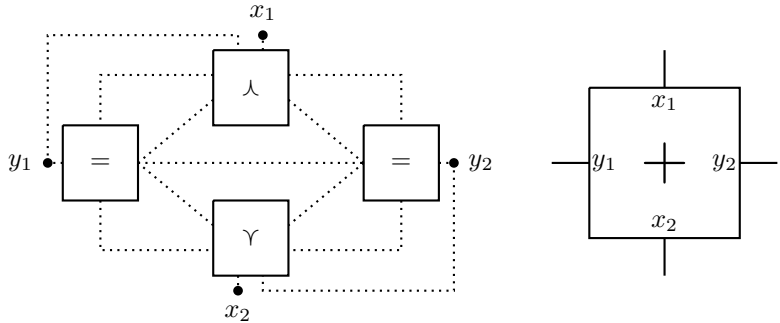
\begin{figure}[!ht]
    \centering
    \begin{tikzpicture}[thick]
      \draw (0,0) -- (1,0) -- (1,-1) -- (0,-1) -- (0,0);
      \node at (0.5,-0.5) {$\curlywedge$};

      \draw (0,-2) -- (1,-2) -- (1,-3) -- (0,-3) -- (0,-2);
      \node at (0.5,-2.5) {$\curlyvee$};
      
      \draw (-2,-1) -- (-1,-1) -- (-1,-2) -- (-2,-2) -- (-2,-1);
      \node at (-1.5,-1.5) {$=$};
      
      \draw (2,-1) -- (3,-1) -- (3,-2) -- (2,-2) -- (2,-1);
      \node at (2.5,-1.5) {$=$};

      \draw[dotted] (0.66,0) -- (0.66,0.2);
      \draw[dotted] (0.33,-3) -- (0.33,-3.2);
      \draw[dotted] (-2,-1.5) -- (-2.2,-1.5);
      \draw[dotted] (3,-1.5) -- (3.2,-1.5);
      \draw[dotted] (-2.2,-1.5) -- (-2.2,0.2) -- (0.33,0.2) -- (0.33,0);
      \draw[dotted] (3.2,-1.5) -- (3.2,-3.2) -- (0.66,-3.2) -- (0.66,-3);

      \draw[dotted] (0,-0.33) -| (-1.5,-1);
      \draw[dotted] (1,-0.33) -| (2.5,-1);

      \draw[dotted] (0,-0.66) -- (-1,-1.5);
      \draw[dotted] (0,-2.33) -- (-1,-1.5);
      \draw[dotted] (2,-1.5) -- (1,-0.66);
      \draw[dotted] (2,-1.5) -- (1,-2.33);
      \draw[dotted] (-1,-1.5) to (2,-1.5);
      \draw[dotted] (-1.5,-2) |- (0,-2.66);
      \draw[dotted] (2.5,-2) |- (1,-2.66);

      \node[smallblack,label=above:{$x_1$}] at (0.66,0.2) {};
      \node[smallblack,label=below:{$x_2$}] at (0.33,-3.2) {};
      \node[smallblack,label=left:{$y_1$}] at (-2.2,-1.5) {};
      \node[smallblack,label=right:{$y_2$}] at (3.2,-1.5) {};

      \tikzset{xshift=5cm,yshift=-0.5cm}
      \draw (0,0) -- (2,0) -- (2,-2) -- (0,-2) -- (0,0);
      \draw (1,0.5) -- (1,0);
      \draw (-0.5,-1) -- (0,-1);
      \draw (2.5,-1) -- (2,-1);
      \draw (1,-2.5) -- (1,-2);
      \node at (1,-0.2) {$x_1$};
      \node at (0.2,-1) {$y_1$};
      \node at (1.8,-1) {$y_2$};
      \node at (1,-1.8) {$x_2$};
      \node at (1,-1) {\Huge $ +$};
    \end{tikzpicture}
    \caption{Construction of the crossing gadget (dotted links mean identification of
      vertices) and its schematic representation.}
    \label{fig:cross}
  \end{figure}

\begin{lemma}
  \label{lem:cross}
  Let $\varphi$ be a homomorphism from the crossing gadget depicted in
  Figure~\ref{fig:cross} to $\rho(UC_4)$.
  
  Then, $\varphi(x_1)=\varphi(x_2)$, $\varphi(y_1)=\varphi(y_2)$ and
  $\varphi(y_1)-\varphi(x_1)\in\{\pm 2, \pm 4\}$.

  Conversely, if we partially fix $\varphi$ from $\{x_1,x_2,y_1,y_2\}$ to
  $\{u_0,u_2,u_4,u_6\}$ such that $\varphi(x_1)=\varphi(x_2)$,
  $\varphi(y_1)=\varphi(y_2)$ and
  $\varphi(y_1)-\varphi(x_1)\in\{\pm 2, \pm 4\}$, then one can extend $\varphi$ to
  a homomorphism from the crossing gadget to
  $\rho(UC_4)$.
\end{lemma}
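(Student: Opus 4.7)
The plan is to derive the lemma by applying Lemma~\ref{lemma:copy} to each of the two copy gadgets and Lemma~\ref{lemma:split} to each of the two split gadgets, then chaining the conclusions through the six-way identification at the centre. Let $c$ denote the (single) vertex obtained by this central identification. First I would apply Lemma~\ref{lemma:split} to the upper split gadget: two of its three ``ground'' ports (the ones sitting on the inside of the box in Figure~\ref{fig:cross}) are among the six vertices identified with $c$, while the third ground is identified via the external path with $y_1$; hence $\varphi(y_1) = \varphi(c)$. An identical argument applied to the (rotated) lower split gadget gives $\varphi(y_2) = \varphi(c)$, so $\varphi(y_1) = \varphi(y_2)$.

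Next I would trace the two ``$\pm 2$-signals'' flowing between the two split gadgets. By Lemma~\ref{lemma:split} applied to the upper split, its two non-ground, non-$x_1$ ports map to values $\alpha,\beta \in \{\varphi(c)-2, \varphi(c)+2\}$. By Lemma~\ref{lemma:copy} applied to each copy gadget, the values $\alpha$ and $\beta$ are transmitted unchanged to the corresponding two inner ports of the lower split gadget. Applying Lemma~\ref{lemma:split} once more, this time to the lower split, determines $\varphi(x_2)$ from $\alpha$ and $\beta$ by exactly the same rule that determined $\varphi(x_1)$: if $\alpha = \beta$ then $\varphi(x_1) = \varphi(x_2) = \alpha$, and if $\alpha \neq \beta$ then $\varphi(x_1) - \varphi(c) = \varphi(x_2) - \varphi(c) = 4$. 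In either case $\varphi(x_1) = \varphi(x_2)$, and the difference $\varphi(y_1) - \varphi(x_1) = \varphi(c) - \varphi(x_1)$ lies in $\{\pm 2, \pm 4\}$, as required.

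For the converse, given a valid partial assignment on $\{x_1, x_2, y_1, y_2\}$, I would set $\varphi(c) := \varphi(y_1)$, and then choose $\alpha,\beta$: both equal to $\varphi(x_1)$ if $\varphi(x_1) - \varphi(c) \in \{\pm 2\}$, and unequal (namely the two elements of $\{\varphi(c)-2, \varphi(c)+2\}$) if $\varphi(x_1) - \varphi(c) = 4$. The converse parts of Lemma~\ref{lemma:split} and Lemma~\ref{lemma:copy} then let me extend $\varphi$ across each of the four subgadgets in turn. The main obstacle is purely bookkeeping: verifying that the six vertices identified at the centre correspond, under the labelling of Figures~\ref{fig:split} and~\ref{fig:cross_aux}, to exactly two ground ports of each split gadget together with the ``$x$-axis'' ports of each copy gadget, and that the remaining boundary identifications align the third ground of each split gadget with the appropriate external $y$-port. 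Once that correspondence is pinned down, the proof reduces to a direct composition of Lemmas~\ref{lemma:copy} and~\ref{lemma:split}.
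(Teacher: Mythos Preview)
Your overall strategy---composing Lemmas~\ref{lemma:copy} and~\ref{lemma:split} across the four subgadgets---is the same as the paper's, but the order in which you invoke the lemmas is inverted, and this creates a genuine gap.

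You write that applying Lemma~\ref{lemma:split} to the upper split gadget yields $\varphi(y_1)=\varphi(c)$. It does not: Lemma~\ref{lemma:split} has $\varphi(g_1)=\varphi(g_2)=\varphi(g_3)$ as a \emph{hypothesis}, not a conclusion. The split gadget by itself does not force its three ground ports to agree, so from the identifications $g_2=g_3=c$ and $g_1=y_1$ alone you cannot conclude $\varphi(y_1)=\varphi(c)$. Your bookkeeping paragraph also omits one identification that turns out to be the crucial one: the external port $y_1$ is identified not only with $g_1$ of the upper split gadget but \emph{also} with the outer $x$-port of the left copy gadget (and symmetrically $y_2$ with the outer $x$-port of the right copy gadget). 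It is Lemma~\ref{lemma:copy}, applied to the left copy gadget, that forces $\varphi(y_1)=\varphi(c)$ (and likewise $\varphi(y_2)=\varphi(c)$ via the right copy gadget). Only \emph{after} this do all three grounds of each split gadget coincide, and only then can Lemma~\ref{lemma:split} be invoked. This is precisely the order the paper follows: first the copy gadgets equalise $y_1$, $c$, and $y_2$; then Lemma~\ref{lemma:split} applies to both split gadgets; then the $y$-ports of the copy gadgets transmit the two $\pm 2$-signals between the splits. Once you reorder the argument and record the missing $y_i$--to--copy-gadget identification, the rest of your write-up goes through unchanged.
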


\begin{proof}
  The proof is a direct application of Lemmas~\ref{lemma:copy} and~\ref{lemma:split}. Consider a
  homomorphism from the gadget to $\rho(UC_4)$. The vertices $y_1$ and
  $y_2$ must be mapped to the same vertex due to the copy gadgets. Say $y_1$ and $y_2$ are mapped to
  $u_0$. Notice that in addition, due to the copy gadgets and the identifications of vertices, all vertices of type $g_i$ from the two split gadgets are mapped to $u_0$ as well. Thus, Lemma~\ref{lemma:split} can be applied to the two split gadgets.

By Lemma~\ref{lemma:split}, $x_1$ is mapped to $u_2$, $u_4$, or $u_6$. If $x_1$ is
  mapped to $u_2$ or $u_6$ (say, $u_2$), then by Lemma~\ref{lemma:split}, the remaining two labeled vertices (other than the $g_i$'s) of the upper split
  gadget are also mapped to $u_2$. So are their ``copies'' in the two copy gadgets, and by applying Lemma~\ref{lemma:split} to the lower split gadget, so is
  $x_2$, as desired.

  If $x_1$ is mapped
  to $u_4$, then by Lemma~\ref{lemma:split} the remaining two vertices of the split gadget are
  mapped one to $u_2$ and one to $u_6$; thus, again, so are their copies in
  the copy gadgets, and thus again by Lemma~\ref{lemma:split} $x_2$ is mapped to $u_4$.

  The existence of a homomorphism extending any good mapping of $x_1$,
  $x_2$, $y_1$, and $y_2$ to $\{u_0,u_2,u_4,u_6\}$ can be checked
  similarly.
\end{proof}

The whole vertex gadget is now created by gluing crossing gadgets, as
shown in Figure~\ref{fig:wholev}. Due to Lemma~\ref{lem:cross}, we
obtain that Conditions~\ref{cond:3col} and~\ref{cond:period} are
satisfied.

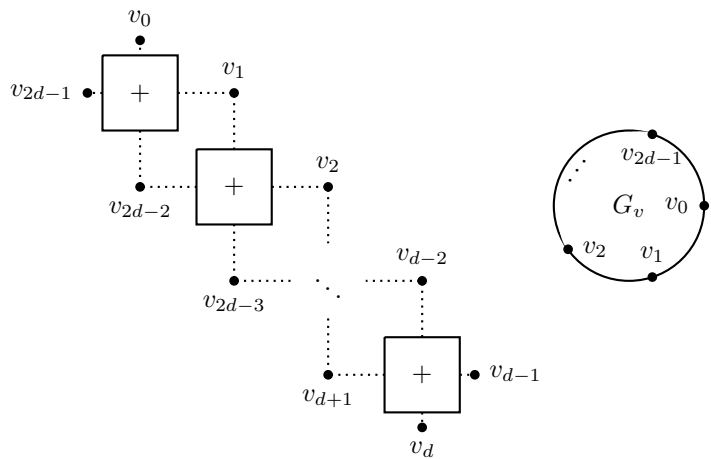
\begin{figure}[!ht]
    \centering
    \begin{tikzpicture}[thick]
      \draw (0,0) -- (1,0) -- (1,-1) -- (0,-1) -- (0,0);
      \node at (0.5,-0.5) {$+$};

      \draw (1.25,-1.25) -- (2.25,-1.25) -- (2.25,-2.25) -- (1.25,-2.25) -- (1.25,-1.25);
      \node at (1.75,-1.75) {$+$};

      \draw (3.75,-3.75) -- (4.75,-3.75) -- (4.75,-4.75) -- (3.75,-4.75) -- (3.75,-3.75);
      \node at (4.25,-4.25) {$+$};
      
      \node at (3,-3) {$\ddots$};
      \draw[dotted] (1,-0.5) -| (1.75,-1.25);
      \draw[dotted] (0.5,-1) |- (1.25,-1.75);
      \draw[dotted] (2.25,-1.75) -| (3,-2.5);
      \draw[dotted] (2.5,-3) -| (1.75,-2.25);
      \draw[dotted] (4.25,-3.75) |- (3.5,-3);
      \draw[dotted] (3,-3.5) |- (3.75,-4.25);
      \draw[dotted] (0,-0.5) -- (-0.2,-0.5);
      \draw[dotted] (0.5,0.2) -- (0.5,0);
      \draw[dotted] (4.75,-4.25) -- (4.95,-4.25);
      \draw[dotted] (4.25,-4.75) -- (4.25,-4.95);
      \node[smallblack,label=above:{$v_0$}] at (0.5,0.2){};
      \node[smallblack,label=above:{$v_1$}] at(1.75,-0.5){};
      \node[smallblack,label=above:{$v_2$}] at(3,-1.75){};
      \node[smallblack,label=above:{$v_{d-2}$}]at (4.25,-3){};
      \node[smallblack,label=right:{$v_{d-1}$}] at(4.95,-4.25){};
      \node[smallblack,label=below:{$v_d$}] at(4.25,-4.95){};
      \node[smallblack,label=below:{$v_{d+1}$}] at(3,-4.25){};
      \node[smallblack,label=below:{$v_{2d-3}$}] at(1.75,-3){};
      \node[smallblack,label=below:{$v_{2d-2}$}] at(0.5,-1.75){};
      \node[smallblack,label=left:{$v_{2d-1}$}] at(-0.2,-0.5){};

      \node[draw,circle, minimum size = 2cm] (m) at (7,-2) {$G_v$};
      \draw[white,bend left=70] ($(m)+(-0.81,-0.58)$) to node[sloped,midway] {\textcolor{black}{$\dots$}} ($(m)+(0.31,0.95)$);
      \node[smallblack,label=left:{$v_0$}] at ($(m)+(1,0)$){};
      \node[smallblack,label=above:{$v_1$}] at ($(m)+(0.31,-0.95)$){};
      \node[smallblack,label=right:{$v_2$}] at ($(m)+(-0.81,-0.58)$){};
      \node[smallblack,label=below:{$v_{2d-1}$}] at ($(m)+(0.31,0.95)$){}; 
    \end{tikzpicture}
    \caption{Vertex gadget $G_v$ (dotted links mean identification of vertices) and its schematic representation.}
    \label{fig:wholev}
  \end{figure}

  We use again the crossing gadget to define the edge gadget. For each
  edge $vw$ of $G$, we link two pairs $(v_{2i},v_{2i+1})$ and
  $(w_{2j},w_{2j+1})$ of consecutive ports of the vertex gadgets $G_v$ and $G_w$
  by identifying $v_{2i},v_{2i+1},w_{2j}$ with $x_1,y_1,x_2$, and adding
  an alternating path of length~$2$ between $y_2$ and $w_{2j+1}$ as
  shown in Figure~\ref{fig:wholee}. 

  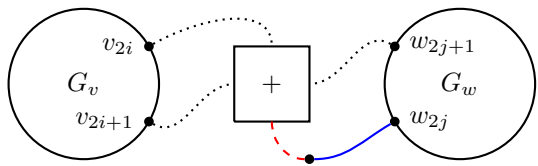
\begin{figure}[!ht]
    \centering
    \begin{tikzpicture}[thick]
      \node[draw,circle, minimum size = 2cm] (m) at (0,0) {$G_v$};
      \node[smallblack,label=left:{$v_{2i}$}] (v2) at (0.866,0.5){};
      \node[smallblack,label=left:{$v_{2i+1}$}] (v1) at (0.866,-0.5){};
      \draw[out=30,in=90,dotted] (v2) to (2.5,0.5);
      \draw[out=-30,in=180,dotted] (v1) to (2,0);

      \tikzset{xshift=2cm}
      \draw (0,-0.5) -- (1,-.5) -- (1,0.5) -- (0,0.5) -- (0,-0.5);
      \node at (0.5,0) {$+$};

      \tikzset{xshift=3cm}

      \node[draw,circle, minimum size = 2cm] (m) at (0,0) {$G_w$};
      \node[smallblack,label=right:{$w_{2j}$}] (w2) at (-0.866,-0.5){};
      \node[smallblack,label=right:{$w_{2j+1}$}] (w1) at (-0.866,0.5){};
      \node[smallblack] (x) at (-2,-1) {};
      \draw[in=0,out=210,blue] (w2) to (x);
      \draw[out=180,in=270,red,dashed] (x) to (-2.5,-0.5);
      \draw[out=150,in=0,dotted] (w1) to (-2,0);

    \end{tikzpicture}
    \caption{Edge gadget (dotted links mean identification of
      vertices).}
    \label{fig:wholee}
  \end{figure}

  Due to this construction, any homomorphism $H\to \rho(UC_4)$ must
  map the bottom vertex ($x_2$) from the crossing gadget and $w_{2j+1}$ to different images, with a difference in
  $\{\pm 2, \pm 4\}$. Thus, Lemma~\ref{lem:cross} also ensures that
  Condition~\ref{cond:proper} holds since any homomorphism maps
  $v_{2i+1}$ and $w_{2j+1}$ to the same image (the ground) and $v_{2i}$ and
  $w_{2j}$ to different ones (their colours).

  To ensure that this construction outputs a planar graph $H$, we
  first fix a planar embedding of $G$. Then, we use the ports of the
  gadgets associated to $v$ in the same cyclic ordering as the edges
  incident to $v$ in $G$.

\subsection{End of the proof}

As we already saw, our three conditions are satisfied by $H$, hence if there
is a homomorphism $H\to \rho(UC_4)$, then $G$ is $3$-colourable.
Conversely, given a $3$-colouring of $G$, we can define a homomorphism
$\varphi:H\to \rho(UC_4)$. Given a vertex $v$ of $G$ with colour
$c\in\{1,2,3\}$, we define $\varphi(v_{2i})=u_0$ and
$\varphi(v_{2i+1})=u_{2c}$. We can then extend this partial
homomorphism to each crossing gadget. Therefore, we obtain a
reduction. Since $H$ can be constructed in polynomial time, we finally
obtain that \PHOM{\rho(UC_4)} (and thus, \Pshom{UC_4}) is NP-hard.

\section{Planar graphs with large girth}\label{sec:girth}

In this section, we prove a ``hypothetical complexity'' type theorem
for s-homomorphisms to unbalanced even cycles, similar to the one
in~\cite{EMOP12} for \PHOM{C_{2k+1}}. This is motivated by the signed graph analogue of Jaeger's conjecture: it is conjectured in~\cite{NRS14} that every planar bipartite signed graph
of girth at least~$4k-2$ has an s-homomorphism to $UC_{2k}$ (see~\cite{CNS} for recent progress).

\begin{theorem}\label{thm:girth}
  Let $g\ge4$ and $k\ge2$ be fixed integers.  Either every bipartite
  planar signed graph with girth at least $g$ has an s-homomorphism to
  $UC_{2k}$, or \Pshom{UC_{2k}} is NP-complete for planar signed
  graphs with girth at least $g$.
\end{theorem}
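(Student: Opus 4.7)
The plan is to adapt the ``hypothetical-complexity'' template of~\cite{EMOP12}. If every bipartite planar signed graph of girth at least $g$ s-maps to $UC_{2k}$, the first alternative holds. Otherwise, fix a bipartite planar signed graph $G_0$ of girth at least $g$ with no s-homomorphism to $UC_{2k}$, and use it to reduce $\Pshom{UC_{2k}}$ on unrestricted planar signed inputs (NP-complete by Theorems~\ref{thm:uc2k} and~\ref{thm:uc4}) to the girth-$\ge g$ version.

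First I would take $G_0$ to be edge-minimal with these properties. Then for every edge $e=uv$, the signed graph $G_0\setminus e$ admits an s-homomorphism to $UC_{2k}$, but no such homomorphism extends to $G_0$. Switching to $\rho(UC_{2k})$ via Proposition~\ref{prop:permequiv}, the set $T_e=\{(\varphi(u),\varphi(v)):\varphi:G_0\setminus e\to\rho(UC_{2k})\}$ is non-empty and avoids every pair joined by an edge of sign $\sigma(e)$. Hence $(G_0\setminus e,u,v)$ is a planar bipartite girth-$g$ \emph{forbidding gadget} on two terminals. To build the reduction, I would combine copies of this gadget (for both signs of $e$, and possibly over several edges of $G_0$, with parallel identifications at terminals) into a single indicator $(I,i,j)$ that is still planar, bipartite and of girth at least $g$. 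Applying Theorem~\ref{thm:indicator} with this $I$ to $\rho(UC_{2k})$ produces a derived graph $H^*$, and turns any planar signed input $H$ into a planar bipartite signed graph $H'$ of girth at least $g$ such that $H'\sto UC_{2k}$ iff $H\to H^*$.

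The main obstacle is ensuring that $\PHOM{H^*}$ remains NP-hard, equivalently that $\text{core}(H^*)$ has at least three edges in view of the Hell--Ne\v{s}et\v{r}il dichotomy~\cite{HN90}. Since $G_0$ has no s-homomorphism to $UC_{2k}$, each basic forbidding gadget rules out at least one adjacency class of $\rho(UC_{2k})$. The vertex-transitive bipartite circulant structure of $\rho(UC_{2k})$, together with the symmetry of $T_e$ under swapping $u$ and $v$, should allow one to intersect enough such forbidden sets (via the combinatorial power of the indicator construction) that the core of the derived target is forced to contain at least three edges; a short case analysis, split by the sign of $e$ and by the orbit class of the excluded pair under the automorphism group of $\rho(UC_{2k})$, should verify this. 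If for some minimal $G_0$ the core collapses to a graph with at most two edges, I would either iterate the construction (composing gadgets in series along a planar cycle of length divisible by~$2k$) or restart with a different edge-minimal counterexample; girth and planarity are preserved throughout, since $G_0$ is itself planar bipartite of girth at least $g$.
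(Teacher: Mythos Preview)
Your reduction hinges on showing that $\PHOM{H^*}$ is NP-hard for the derived target $H^*$, and you invoke the Hell--Ne\v{s}et\v{r}il dichotomy for this. That dichotomy, however, classifies $\HOM{H^*}$, not $\PHOM{H^*}$: as the paper's introduction stresses, no analogous dichotomy is known for planar inputs, and there are targets with non-trivial core (such as $K_4$ or the Clebsch graph) for which $\PHOM$ is polynomial even though $\HOM$ is NP-complete. So the step ``$\text{core}(H^*)$ has at least three edges $\Rightarrow$ $\PHOM{H^*}$ NP-hard'' is unjustified, and nothing in your sketch controls $H^*$ well enough to rule out such collapses. The iteration/restart you propose does not rescue this: it still produces some unknown $H^*$ whose planar complexity you have no a~priori handle on.

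The paper avoids this trap entirely by never changing the target. From the minimal counterexample and a positive edge $xy$, it adds a short positive path to the graph minus $xy$ and glues two copies along an edge of that path to obtain a planar bipartite gadget $J'$ of girth at least $g$ in which two terminals at distance $\ge 4$ are forced to the \emph{same} vertex of $\rho(UC_{2k})$ in every colouring. This equality gadget is then used as a vertex gadget to stretch an arbitrary planar instance of $\PHOM{\rho(UC_{2k})}$ into an equivalent one of girth at least $g$; since the unrestricted problem is already NP-complete by Theorems~\ref{thm:uc2k} and~\ref{thm:uc4}, this is a direct self-reduction. The key idea you are missing is to build an \emph{equality} (vertex) gadget that raises the girth, not an indicator that alters the target.
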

\begin{proof}
  Note that by Proposition~\ref{prop:permequiv} and
  Corollary~\ref{cor:permequiv}, the statement is equivalent to the
  one stating that every bipartite planar signed graph with girth at
  least $g$ has a homomorphism to $\rho(UC_{2k})$, or
  \PHOM{\rho(UC_{2k})} is NP-complete for planar signed graphs with
  girth at least $g$.

  Suppose that $H$ is a bipartite planar graph with girth at least $g$
  that does not map to $\rho(UC_{2k})$ and that $H$ is minimal with
  respect to the subgraph order.  Let $xy$ be a positive edge of
  $H$. By minimality, $H\setminus xy$ is $\rho(UC_{2k})$-colourable.
  Let $i$ be the smallest integer such that there exists a
  homomorphism that maps $x$ to $u_0$ and $y$ to $u_i$. So $i$ is odd
  and $3\le i\le 2k-1$.  Let $J$ be the graph obtained from
  $H\setminus xy$ by adding a path $x=x_0,x_1,\ldots,x_{i-1},x_i=y$ of
  $i$ positive edges between $x$ and $y$.  So $J$ is
  $\rho(UC_{2k})$-colourable and since $\rho(UC_{2k})$ is
  edge-transitive, we can assume that $x_0$ maps to $u_0$ and $x_1$
  maps to $u_1$. Then by minimality of $i$, $x_j$ maps to $u_j$ for
  every $0\le j\le i$ and we call this a canonical colouring.  Now we
  consider the graph $J'$ obtained from two copies $J_1$ and $J_2$ of
  $J$ by identifying the vertices $x_{i-1}$ (resp. $x_i$) of both
  copies. If a canonical colouring of the subgraph $J_1$ is extended
  to $J'$, then both vertices $x_0$ map to $u_0$.  Thus, every
  $\rho(UC_{2k})$-colouring of $J'$ maps both vertices $x_0$ to the
  same vertex.  Notice that the distance between the two vertices
  $x_0$ is $2i-2\ge 4$.

  Then, every instance $G$ of \PHOM{\rho(UC_{2k})} can be transformed
  into an equivalent instance $G'$ of \PHOM{\rho(UC_{2k})} with girth
  at least $g$ using, as a vertex gadget, sufficiently many copies of the gadget $J'$.
\end{proof}

\section{Restricting the maximum degree}\label{sec:maxdeg}

We now consider instance restrictions according to the maximum
degree. It is proved in~\cite{GHN00} that \HOM{C_{2k+1}} can be
reduced to \HOM{C_{2k+1}} for subcubic graphs using a gadget consisting of a sequence of copies of $C_{2k+1}$ glued to each other. Furthermore, this reduction preserves the planarity,
thus it follows from~\cite{HNT06,MS09} that \PHOM{C_{2k+1}} is
NP-complete for subcubic graphs. Here, we show an analogue of this
result for s-homomorphisms to unbalanced even cycles.

\begin{theorem}\label{thm:girth-smalldeg}
  \Pshom{UC_4} remains NP-complete for signed graphs with maximum
  degree~$4$. For every $k\geq 3$, \Pshom{UC_{2k}} remains NP-complete for
  subcubic signed graphs (of girth~$2k$).
\end{theorem}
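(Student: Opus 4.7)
The proof splits naturally into two parts, corresponding to the two claims of the theorem.

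For \Pshom{UC_4} with maximum degree~$4$, the plan is to verify directly that the reduction in the proof of Theorem~\ref{thm:uc4} already produces signed graphs whose maximum degree is~$4$. An inspection of the copy, split and crossing gadgets (Figures~\ref{fig:cross_aux}, \ref{fig:split} and~\ref{fig:cross}) shows that each internal vertex has degree at most~$4$. The only non-trivial cases are the ports $v_i$ of the vertex gadget $G_v$ of Figure~\ref{fig:wholev} and the central identified vertices of each crossing gadget: a port $v_i$ is shared between exactly two consecutive crossing gadgets and picks up two incident edges from each, yielding degree $4$, while the quadruple identifications at the centre of each crossing gadget contribute two blue and two red edges, also giving degree $4$. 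A similar check for the edge gadget of Figure~\ref{fig:wholee} completes the verification.

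For \Pshom{UC_{2k}} with $k\ge 3$ on subcubic planar inputs of girth $2k$, the indicator $(I,i,j)$ of Figure~\ref{fig1} used in Theorem~\ref{thm:uc2k} doubles the degree of each input vertex, so it cannot directly produce subcubic outputs. The plan is to combine the reduction of Theorem~\ref{thm:uc2k} with the classical degree-reduction technique of Galluccio, Hell and Ne\v{s}et\v{r}il~\cite{GHN00}: after the reduction of Theorem~\ref{thm:uc2k} is applied, each remaining vertex $v$ of degree $d>3$ is split into $d$ copies $v^{(1)},\ldots,v^{(d)}$, each carrying a single original edge at $v$, and the copies are chained together by successively gluing copies of $UC_{2k}$ along edges, respecting the planar cyclic ordering of edges at $v$. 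The resulting signed graph is subcubic, planar and of girth exactly $2k$, the length of each copy of $UC_{2k}$ in the chain. A standard rigidity argument for s-homomorphisms of $UC_{2k}$ to itself (where, up to switching, the underlying cycle must map onto itself as an automorphism) shows that every s-homomorphism of the chain to $UC_{2k}$ assigns the same image to all $d$ copies of $v$, so the reduction is correct.

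The main technical obstacle is establishing the rigidity of this $UC_{2k}$-chain gadget: one must show by induction on the length of the chain that gluing a new copy of $UC_{2k}$ along a shared edge forces the newly added distinguished vertex to receive the same image as its counterpart in the previous copy. This relies on the fact that an edge in $UC_{2k}$ together with its sign pins down its image in a target $UC_{2k}$ up to a reflection, and that the reflection can be ruled out by using a chain of length at least two. Once the rigidity of the chain is in place, the correctness and NP-hardness of the reduction follow immediately from Theorem~\ref{thm:uc2k}, while planarity is ensured by respecting the cyclic ordering at each split vertex, and the girth bound of $2k$ follows from the fact that all gadget cycles arise from copies of $UC_{2k}$.
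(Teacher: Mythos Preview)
Your plan for $k\geq 3$ is essentially the paper's own argument: the paper also replaces each high-degree vertex by a chain of copies of $UC_{2k}$ glued along an edge (this is exactly the gadget of Figure~\ref{fig:huc}(b)), citing the same source~\cite{GHN00}, and relies on the same rigidity of such chains. One point is handled differently: you assert that the girth bound follows because ``all gadget cycles arise from copies of $UC_{2k}$'', but after the indicator of Theorem~\ref{thm:uc2k} the intermediate signed graph is full of $4$-cycles, and you would still have to argue that every one of them is destroyed by the vertex-splitting. The paper avoids this by first invoking Theorem~\ref{thm:girth} to get NP-hardness already for planar inputs of girth~$2k$, and only then applying the degree-reduction gadget; with that order the girth bound is immediate.

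Your plan for $UC_4$, however, has a genuine gap: the reduction of Theorem~\ref{thm:uc4} does \emph{not} produce signed graphs of maximum degree~$4$. The centre of each crossing gadget is the identification of \emph{six} ports (not four, as you write): the two $g$-ports of each of the two split gadgets together with $x_2$ of the left copy gadget and $x_1$ of the right copy gadget. Summing the degrees of these ports in their respective subgadgets already gives degree~$11$. Likewise, an interior port such as $v_1$ is the identification of a $y_2$-port (degree~$2$ in the crossing gadget) with an $x_1$-port (degree~$3$), giving degree~$5$, and the edge gadget identifies it further with a $y_1$-port (degree~$3$) of yet another crossing gadget, for a total degree of~$8$; the endpoint port $v_0$ similarly ends up with degree~$6$. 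So no direct inspection of Section~\ref{sec:uc4} can yield the degree-$4$ bound. The paper instead treats $UC_4$ with the \emph{same} self-reduction technique as for $k\geq 3$: the chain-of-$UC_4$ gadget of Figure~\ref{fig:huc}(a), in which the vertices labelled~$0$ have degree~$3$ inside the gadget and hence degree~$4$ once an external edge is attached. This is also why the paper explicitly notes that the method cannot bring $UC_4$ down to maximum degree~$3$.
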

\begin{proof}
  In both cases, we reduce \Pshom{UC_{2k}} itself to \Pshom{UC_{2k}} on graphs with maximum
  degree $\Delta=3$ or $4$ using a vertex-gadget with appropriate vertex degrees that
  forces the same colour (say colour $0$) on arbitrarily many vertices
  of degree $\Delta-1$. Our gadgets are similar to the ones from~\cite{GHN00} for
  \PHOM{C_{2k+1}}, and are depicted in Figure~\ref{fig:huc}. For the reduction, each vertex $v$ of degree~$d$ of the input graph is replaced by a copy $G_v$ of this gadget containing $2d$ glued cycles, and if $v$ and $w$ are adjacent, we add an edge between a vertex of $G_v$ and a vertex of $G_w$ that are labeled $0$ in Figure~\ref{fig:huc}. Since the gadgets have girth~$2k$ and by Theorem~\ref{thm:girth}, \Pshom{UC_{2k}} is NP-complete for inputs of girth~$2k$, our construction producees inputs of girth~$2k$.

Observe that
  this approach does not work for \Pshom{UC_4} on subcubic graphs:
  vertices coloured with $0$ in the corresponding gadget have degree~$3$,
  not~$2$.\end{proof}

  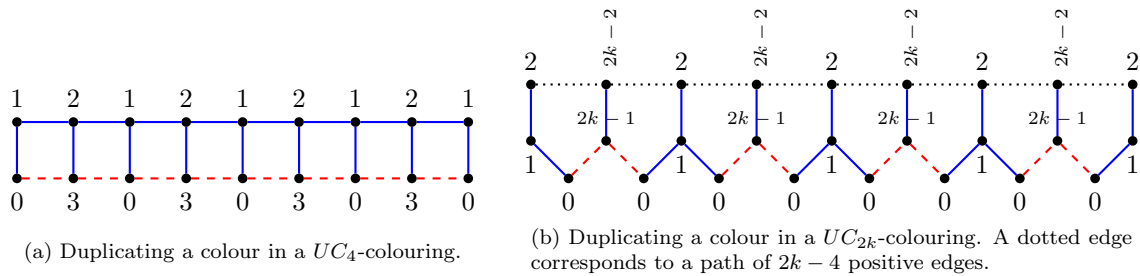
\begin{figure}[!ht] 
    \begin{tabular}{cc}
      \begin{tikzpicture}[thick,scale=0.75]
        \node[smallblack,label=above:{$1$}] (01) at (0,1){}; 
        \node[smallblack,label=above:{$2$}] (11) at (1,1){}; 
        \node[smallblack,label=above:{$1$}] (21) at (2,1){}; 
        \node[smallblack,label=above:{$2$}] (31) at (3,1){}; 
        \node[smallblack,label=above:{$1$}] (41) at (4,1){}; 
        \node[smallblack,label=above:{$2$}] (51) at (5,1){}; 
        \node[smallblack,label=above:{$1$}] (61) at (6,1){}; 
        \node[smallblack,label=above:{$2$}] (71) at (7,1){}; 
        \node[smallblack,label=above:{$1$}] (81) at (8,1){}; 
        \node[smallblack,label=below:{$0$}] (00) at (0,0){}; 
        \node[smallblack,label=below:{$3$}] (10) at (1,0){}; 
        \node[smallblack,label=below:{$0$}] (20) at (2,0){}; 
        \node[smallblack,label=below:{$3$}] (30) at (3,0){}; 
        \node[smallblack,label=below:{$0$}] (40) at (4,0){}; 
        \node[smallblack,label=below:{$3$}] (50) at (5,0){}; 
        \node[smallblack,label=below:{$0$}] (60) at (6,0){}; 
        \node[smallblack,label=below:{$3$}] (70) at (7,0){}; 
        \node[smallblack,label=below:{$0$}] (80) at (8,0){}; 
        \draw[red,dashed] (00) -- (10) -- (20) -- (30) -- (40) -- (50) -- (60) -- (70) -- (80);
        \draw[blue] (00) -- (01) -- (11) -- (21) -- (31) -- (41) -- (51) -- (61) -- (71) -- (81) -- (80);
        \draw[blue] (10) -- (11);
        \draw[blue] (20) -- (21);
        \draw[blue] (30) -- (31);
        \draw[blue] (40) -- (41);
        \draw[blue] (50) -- (51);
        \draw[blue] (60) -- (61);
        \draw[blue] (70) -- (71);
      \end{tikzpicture}
      & \begin{tikzpicture}[thick]
        \node[smallblack,label=above:{$2$}] (01) at (0,0.75){}; 
        \node[smallblack,label={[label distance=1mm,rotate=90]right:{\scriptsize $2k-2$}}] (11) at (1,0.75){}; 
        \node[smallblack,label=above:{$2$}] (21) at (2,0.75){}; 
        \node[smallblack,label={[label distance=1mm,rotate=90]right:{\scriptsize $2k-2$}}] (31) at (3,0.75){}; 
        \node[smallblack,label=above:{$2$}] (41) at (4,0.75){}; 
        \node[smallblack,label={[label distance=1mm,rotate=90]right:{\scriptsize $2k-2$}}] (51) at (5,0.75){}; 
        \node[smallblack,label=above:{$2$}] (61) at (6,0.75){}; 
        \node[smallblack,label={[label distance=1mm,rotate=90]right:{\scriptsize $2k-2$}}] (71) at (7,0.75){}; 
        \node[smallblack,label=above:{$2$}] (81) at (8,0.75){}; 
        \node[smallblack,label=below:{$1$}] (00) at (0,0){}; 
        \node[smallblack,label=above:{\scriptsize $2k-1$}] (10) at (1,0){}; 
        \node[smallblack,label=below:{$1$}] (20) at (2,0){}; 
        \node[smallblack,label=above:{\scriptsize $2k-1$}] (30) at (3,0){}; 
        \node[smallblack,label=below:{$1$}] (40) at (4,0){}; 
        \node[smallblack,label=above:{\scriptsize $2k-1$}] (50) at (5,0){}; 
        \node[smallblack,label=below:{$1$}] (60) at (6,0){}; 
        \node[smallblack,label=above:{\scriptsize $2k-1$}] (70) at (7,0){}; 
        \node[smallblack,label=below:{$1$}] (80) at (8,0){}; 
        \node[smallblack,label=below:{$0$}] (0) at (0.5,-0.5){}; 
        \node[smallblack,label=below:{$0$}] (1) at (1.5,-0.5){}; 
        \node[smallblack,label=below:{$0$}] (2) at (2.5,-0.5){}; 
        \node[smallblack,label=below:{$0$}] (3) at (3.5,-0.5){}; 
        \node[smallblack,label=below:{$0$}] (4) at (4.5,-0.5){}; 
        \node[smallblack,label=below:{$0$}] (5) at (5.5,-0.5){}; 
        \node[smallblack,label=below:{$0$}] (6) at (6.5,-0.5){}; 
        \node[smallblack,label=below:{$0$}] (7) at (7.5,-0.5){}; 
        \draw[red,dashed] (0) -- (10) -- (1);
        \draw[red,dashed] (2) -- (30) -- (3);
        \draw[red,dashed] (4) -- (50) -- (5);
        \draw[red,dashed] (6) -- (70) -- (7);
        \draw[dotted] (01) -- (11) -- (21) -- (31) -- (41) -- (51) -- (61) -- (71) -- (81);
        \draw[blue] (0) -- (00) -- (01);
        \draw[blue] (10) -- (11);
        \draw[blue] (1) -- (20) -- (21);
        \draw[blue] (2) -- (20);
        \draw[blue] (30) -- (31);
        \draw[blue] (3) -- (40) -- (41);
        \draw[blue] (4) -- (40);
        \draw[blue] (50) -- (51);
        \draw[blue] (5) -- (60) -- (61);
        \draw[blue] (6) -- (60);
        \draw[blue] (70) -- (71);
        \draw[blue] (7) -- (80) -- (81);
      \end{tikzpicture}
      \\
      \footnotesize{(a) Duplicating a colour in a $UC_4$-colouring.} 
      & \begin{minipage}{80mm}
        \footnotesize{(b) Duplicating a colour in a $UC_{2k}$-colouring. A dotted edge corresponds to a path of $2k-4$ positive edges.}
      \end{minipage}
    \end{tabular}
    \caption{Gadgets with vertices of degree~$3$ or~$2$, that are
      mapped to the same vertex in a $UC_{2k}$-colouring.} \label{fig:huc}
  \end{figure}

\section{Concluding remarks}\label{sec:conclu}

It would be interesting to settle the complexities of \shom{UC_4} and \Pshom{UC_4} for graphs of maximum degree~$3$. Studying \Pshom{H} for further classes of signed graphs $H$ is also of interest. As \PHOM{H} is connected to studies on homomorphism bounds for planar graphs~\cite{HNT06,N07,NO06}, \Pshom{H} is connected to the signed counterparts of these works, see for example~\cite{BFN,NRS13}.

When it comes to non-signed graphs, combined results of Moser~\cite{M97} and Zhu~\cite{Zhu99} show that
for every rational~$q$ such that $2\le q\le 4$, there exists a planar
graph with circular chromatic number~$q$.  The problem of deciding
whether the circular chromatic number of a planar graph is at most~$q$
is NP-complete if $q$ is equal to:
\begin{itemize}
\item $3+\frac12$, by Theorem~\ref{thm:c2t} since
  $C_7^2=\overline{C_7}=K_{7/2}$.
\item $2+\frac1t$ for every $t\ge1$ by~\cite{HNT06,MS09}
\item $2+\frac{2}{2t-1}$ for every $t\ge2$, by Theorem~\ref{thm:cc}.
\end{itemize}
It would be nice to extend these results to every rational $q$ with
$2<q<4$.

\end{document}